\documentclass[lettersize,journal]{IEEEtran}
\usepackage{amsmath,amsfonts}
\usepackage{algorithmic}
\usepackage{algorithm}
\usepackage{array}
\usepackage[caption=false,font=normalsize,labelfont=sf,textfont=sf]{subfig}
\usepackage{textcomp}
\usepackage{stfloats}
\usepackage{url}
\usepackage{verbatim}
\usepackage{graphicx}
\usepackage{cite}
\usepackage{upgreek}
\usepackage{booktabs}
\usepackage{makecell}
\usepackage{multirow}
\usepackage{color}
\usepackage{bbding}
\usepackage{pifont}
\newtheorem{theorem}{Theorem}

\newtheorem{remark}{Remark}
\graphicspath{{figure/}}
\begin{document}

\title{
Characterization and Mitigation of Polyphase-Code  \\Artifacts in 5G NR ISAC
}

\author{Xingkang~Li,~\IEEEmembership{Graduate Student Member,~IEEE},
	Shengheng~Liu,~\IEEEmembership{Senior Member,~IEEE},
	Ziguo~Zhong,
	Fanfei~Xu,~\IEEEmembership{Graduate Student Member,~IEEE},
	Qingji~Jiang,~\IEEEmembership{Graduate Student Member,~IEEE},
	Dazhuan~Xu,
	Yongming~Huang,~\IEEEmembership{Fellow,~IEEE}
	
\thanks{
	This work was supported in part by the National Science and Technology Major Project under Grant (No.2024ZD1300200). (Corresponding Author: Shengheng Liu.)
	
Xingkang~Li, Shengheng~Liu, Fanfei~Xu, Qingji~Jiang and Yongming~Huang are with the School of Information Science and Engineering, Southeast University, Nanjing 210096, China, and also with the Purple Mountain Laboratories, Nanjing 211111, China (email: s.liu@seu.edu.cn).

Ziguo~Zhong and Dazhuan~Xu are with the Purple Mountain Laboratories, Nanjing 211111, China.}
\vspace{-2.5em}
}


\maketitle

\begin{abstract}
Target sensing utilizing 5G New Radio (NR) reference signals has emerged as a prominent research direction in both academia and industry. 
However, non-ideal factors in practical deployments exert a significant detrimental impact on target sensing performance, manifesting as artifacts in the range-velocity (RV) spectrum. These artifacts mask weak targets and cause severe false alarms.
To address these challenges, this paper establishes a theoretical model of artifacts and constructs a data-physics-driven deep learning paradigm for artifact mitigation. 
First, the origin of artifacts and their characteristics are theoretically derived. 
These analyses demonstrate that the artifacts are associated with polyphase codes (e.g., Zadoff-Chu sequences) and reveal their characteristics, including periodic extensions in the range domain and spectral spreading in the velocity domain. 
Then, the physical priors of artifacts are formalized as temporal continuity and spatial consistency, informing the design of the training mechanism for the proposed network. 
Guided by these insights, we propose a physics-informed artifact elimination neural network (PIAENet). 
At its core is a multi-frame selective-masked encoder-decoder module, explicitly designed to incorporate the above priors. 
Specifically, temporal continuity is implemented via a multi-frame mechanism to capture features across consecutive RV spectra.
Meanwhile, spatial consistency is realized through a selective masking mechanism to enhance reconstruction of artifact-affected regions. 
Extensive validation is conducted using real-world measured data collected with commercial mmWave equipment. The polyphase-code-related characteristics of the artifacts are experimentally validated. Meanwhile, the experimental results demonstrate that the proposed PIAENet not only effectively reduces the false target count but also improves the detection probability from $\textbf{79.58\%}$ to $\textbf{98.88\%}$.
\end{abstract}

\begin{IEEEkeywords}
Integrated sensing and communication (ISAC), target sensing, interference suppression, 5G New Radio (NR), reference signals, physics-informed neural network (PINN).
\end{IEEEkeywords}

\vspace{-1em}

\section{Introduction}

With the expansion of 5th generation-advanced (5G-A) and 6th generation (6G) use cases \cite{You2023Toward,ITU2022Future}, such as trajectory tracking and intrusion detection, integrated sensing and communication (ISAC) has emerged as a core enabling technology for future wireless networks \cite{Liu2022Integrated,Nuria2024integrated}. To this end, the 3rd Generation Partnership Project (3GPP) is currently undertaking formal standardization for ISAC, covering its deployment scenarios \cite{3GPPISAC1}, sensing procedures, and key performance indicators \cite{3GPPISAC2}. In general, the ongoing ISAC standardization is developed on the basis of existing communication waveforms and protocols \cite{Shi2022Device,Yang2026Hierarchical}, with the aim of realizing sensing functionality with minimal overhead and modification. 

Aligned with this design principle, target sensing utilizing 5G New Radio (NR) signals \cite{Wypich2025passive,Zhang2025Target}, such as the sounding reference signal (SRS), is a focus of academic research and standardization discussions. These reference signals are often generated from polyphase codes, such as Zadoff-Chu (ZC) sequences \cite{Chu1972Polyphase}. Prized for their excellent autocorrelation and low peak-to-average power ratio, they are therefore extensively used in channel estimation and target sensing.
On this basis, a sensing module can be directly retrofitted onto the existing 5G NR communication system \cite{Liu2026Cooperative,Na2025Integrated}.

However, practical deployments may introduce interference due to non-ideal factors, including unpredictable stochastic noise, transmitter-receiver asynchronism, and local oscillator instability \cite{Ding2025Bi,Mao2025Model}. These non-ideal factors degrade sensing performance far more significantly than communication performance. In real-world scenarios, we verified the target sensing functionality with commercial mmWave devices and observed the striped artifacts shown in Fig. \ref{fig:artifacts1}(a). We repeated the experiments with commercial sub-6 GHz devices from different vendors, and the artifacts persisted as shown in Fig. \ref{fig:artifacts1}(b). These experiments confirm that the artifacts are not induced by device-specific impairments. 
Through extensive experiments, we draw the following preliminary conclusions. When polyphase-code sequences are used for sensing, non-ideal factors produce artifacts in the range-velocity (RV) spectrum. These artifacts can mask distant weak targets, cause false alarms, and severely degrade sensing performance. 
To address the artifact-related issues, our work focuses on \textit{how to mitigate polyphase-code artifacts in 5G NR-based sensing without altering the standardized waveforms or protocols}. 

 

\begin{figure*}[]
	\centerline{\includegraphics[width=0.99\linewidth]{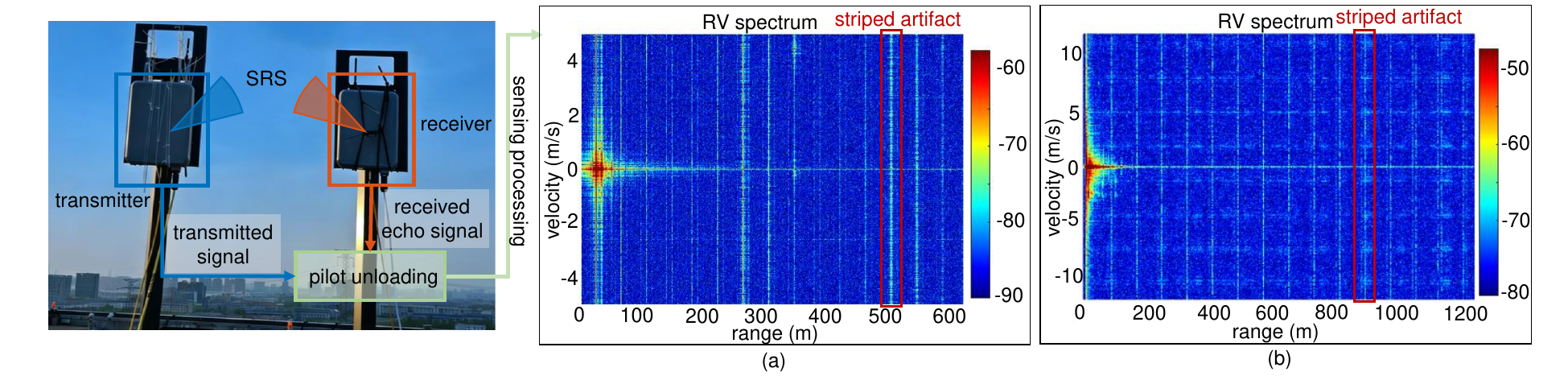}}
	\vspace{-0.75em}
	\caption{{Photograph of the 5G NR-based sensing system and RV spectrum test results. The red bounding box identifies the striped artifacts in the RV spectrum. (a) RV spectrum test results of $25.6$ GHz mmWave devices. (b) RV spectrum test results of $4.9$ GHz sub-6 GHz devices. }}\label{fig:artifacts1}
\end{figure*}

The aforementioned problem can be framed as a specialized interference suppression problem. In the field of radar signal processing, interference suppression methods have been extensively studied. These methods can be broadly divided into three categories, namely physics-driven methods based on theoretical models, data-driven methods based on neural networks, and data-physics-driven methods.

Physics-driven methods are developed based on the physical properties of target echoes. Interference suppression can be regarded as a filtering process in the time-frequency domain \cite{Zhang2024Dual}. For example, wavelet denoising is used to extract interference signals from the output of a time-domain low-pass filter \cite{Lee2021Mutual,Xu2021Interference}. By leveraging the stationary nature of interference in the RV domain, the extended cancellation algorithm can utilize the signal from the reference channel to remove static clutter interference \cite{Alland2019Interference,Jin2019Automotive}. However, weak targets are often completely masked by strong interference, making it difficult to separate the target signal from interference \cite{Baral2023Automotive}.

Deep learning approaches have garnered significant interest for interference suppression \cite{Siam2025Artificial,Nirmal2021Deep}. Convolutional neural networks (CNNs) are widely used to map complex signals to the frequency-domain distribution of target signals, enabling end-to-end training for interference suppression \cite{Ristea2020Fully}. Moreover, the image-like RV spectrum is highly amenable to processing via data-driven methods well established in the computer vision field \cite{Fuchs2020Automotive}. Thus, interference suppression can be regarded as denoising and reconstruction of the RV spectrum \cite{Zhu2021Low,Wang2024Interference}. However, data-driven methods often require manual re-tuning of parameters or even complete redesign when the deployment environment changes. This severely limits the robustness and maintainability of those methods in practical deployments.

Recently, physics-guided deep learning has also been extensively applied to interference suppression \cite{Wang2025Physics,Wang2022Prior}. Such methods typically incorporate physical properties as priors into network training to enhance interpretability, generalization, and robustness \cite{Park2024Interference,Pang2025MFS,Zhang2024FUAS}. For example, by incorporating the spectral-spatial decomposition mechanism as a physical prior, a two-stage generative network is developed to realize high-fidelity signal reconstruction under severe radar interference \cite{Zhao2025RaSPD}. To leverage the motion characteristics of targets, a long short-term memory network is designed to learn the spatial and temporal dynamic characteristics of various targets in the RV domain \cite{Khalid2019Convolutional}. The aforementioned data-physics-driven methods can all achieve interference suppression in their respective specific systems. 
 
Although data-physics-driven methods for interference suppression have garnered significant interest, the novel characteristics of target echoes in 5G NR-based sensing remain to be systematically characterized and exploited for artifact elimination. Therefore, \textit{the polyphase-code artifact mitigation in 5G NR-based sensing}, which is the focus of this paper, requires further dedicated research.


To bridge this research gap, this paper constructs a physics-guided deep learning paradigm to address artifact-related challenges in 5G NR-based sensing.
We first establish a theoretical model that elucidates the origin and characteristics of artifacts. This analysis reveals the distinct physical characteristics of artifacts in the range and velocity domains, which are formalized as physical priors, namely temporal continuity and spatial consistency. 
Guided by these insights, we propose a physics-informed artifact elimination neural network (PIAENet) to mitigate the impact of artifacts on target detection. Within PIAENet, artifact elimination is performed by a multi-frame selective-masked encoder-decoder, whose design is explicitly informed by the derived priors. Specifically, temporal continuity is implemented as a multi-frame mechanism to capture temporal correlations among RV spectra. Meanwhile, spatial consistency is implemented as a selective masking mechanism to enhance the reconstruction of artifact-affected regions. The proposed PIAENet effectively tackles real-world impairments by embedding physical priors into the training mechanism. 

The key technical contributions of this work can be summarized as follows.
\begin{itemize}
	\item We identify the presence of striped artifacts in real-world deployments, which severely degrade sensing performance. When polyphase-code sequences are employed for sensing, practical hardware impairments will induce artifacts in the RV spectrum.
	\item We establish a theoretical model that explains the origin and characteristics of polyphase-code artifacts in 5G NR-based sensing. The analysis reveals that artifacts exhibit periodic extensions in the range domain and spectral spreading in the velocity domain.
	\item We propose a data-physics-driven framework PIAENet centered on a multi-frame selective-masked encoder-decoder module. It incorporates temporal and spatial priors into the network training, enabling effective artifact elimination while maintaining sensitivity to weak targets.
	\item We conduct extensive validation experiments using real-world measured data collected with commercial communication devices. Experimental results show that the proposed network achieves a detection probability of up to $98.88\%$ and effectively reduces the false target count.
\end{itemize}

The rest of the article is structured as follows. Section \ref{sec:SystemModel} introduces the sensing signal model and analyzes the artifact origin. Section \ref{sec:Elimination} details the proposed physics-informed artifact elimination methodology. Section \ref{sec:Experimental} presents real-world experiments and simulations to evaluate the proposed framework. Finally, the conclusion of this paper is drawn in Section \ref{sec:conclusion}.

\textit{Notations:} Lower (upper)-case bold characters denote vectors (matrices), and the vectors are by default in column orientation. The superscript $(\cdot)^{\mathsf T}$ represents the transpose operator. Symbol $\ast$ denotes the convolution operation. $\left\Arrowvert\cdot\right\Arrowvert_2$ denotes the $2$-norm. $\left\lceil\cdot\right\rceil$ represents the ceiling function. $\mathbb{C}$ represents the set of complex numbers. ${\mathrm j}$ represents the imaginary unit. $\mathcal{F}\{\cdot\}$ denotes the Fourier transform.

\section{Signal Model and Artifact Origin Analysis}\label{sec:SystemModel}
For clarity, a bistatic scenario is used to present the work, as shown in Fig. \ref{fig:system}. However, the observation can be extended to monostatic and multistatic scenarios. In 5G NR physical-layer specifications, several reference signals are defined for purposes such as channel estimation. These predefined reference signals can be readily exploited to implement sensing capabilities without incurring significant communication overhead. In the following, we introduce the ideal sensing signal model and systematically elaborate the origin analysis and theoretical modeling of the artifacts.
 
\subsection{Ideal Sensing Signal Model}
In the 5G NR-based sensing system, the transmitter and receiver operate at a carrier frequency $f_{\mathrm{c}}$. The transmitter emits an orthogonal frequency division multiplexing (OFDM) signal for sensing, which consists of $K$ subcarriers with a spacing of $\Delta_\mathrm{f}$ and $L$ symbols. Simultaneously, the receiver performs radar sensing using the echo signals to detect targets and estimate their range and velocity. 
\subsubsection{Transmitted signal model}
The ZC sequence \cite{Chu1972Polyphase} is adopted as the pilot sequence for the SRS \cite{3GPPNR}, owing to its ideal periodic autocorrelation properties, constant-envelope characteristic, and low peak-to-average power ratio. The frequency-domain expression of this SRS pilot sequence on the $k$-th subcarrier is given by
\begin{align}\label{eq:bk}
	S_{k}=\mathrm{e}^{-\mathrm{j} \mathrm{\uppi}\mu k(k+1)/K},\quad k=0,...,K-1 ,
\end{align}
where $\mu$ is the root index of the ZC sequence. 
\begin{figure}[]
	\centerline{\includegraphics[width=0.9\linewidth]{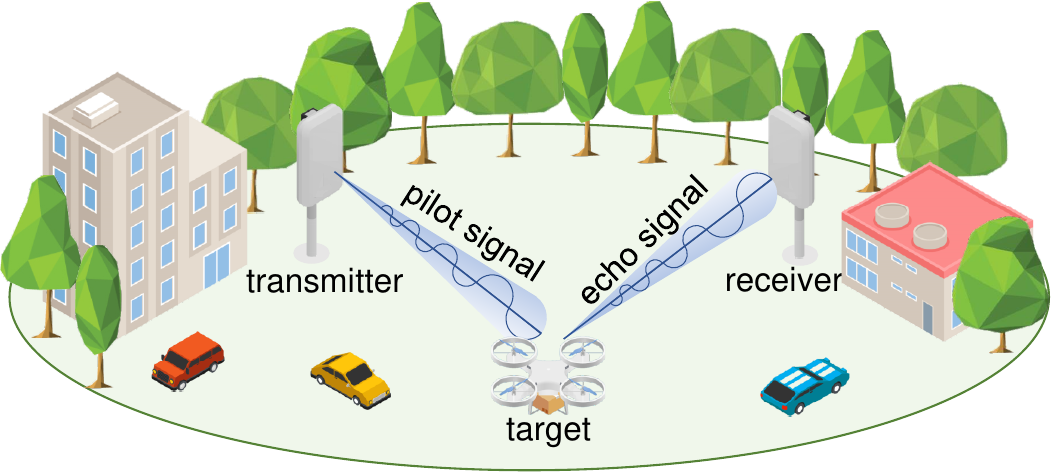}}
	\caption{\textmd{Illustration of a bistatic sensing system using 5G NR communication devices.}}\label{fig:system}
	\vspace{-1.5em}
\end{figure}
The continuous-time baseband signal of a single OFDM symbol at time $t$ is represented as 
\begin{align}\label{eq:sensingsignals1}
	{s}(t)=\frac{1}{\sqrt{K}}\sum_{k=0}^{K-1}S_{k}\mathrm{e}^{\mathrm{j} 2\mathrm{\uppi}k\Delta_\mathrm{f}t}\mathrm{rect}\left(\frac{t}{T_{\mathrm{D}}}\right),
\end{align}
where $\mathrm{rect}(\cdot)$ is the rectangular window function. $T_{\mathrm{D}}=1/\Delta_\mathrm{f}$ is the effective symbol duration. Considering the addition of a cyclic prefix (CP) to the OFDM waveform, the transmitted baseband signal of the $l$-th symbol is rewritten as
\begin{align}\label{eq:sensingsignalx1}
	{x}_l(t)=\begin{cases}
		{s}(t-lT+T_{\mathrm{D}}-T_{\mathrm{CP}}),\, 0\leq t-lT<T_{\mathrm{CP}};\\
		{s}(t-lT-T_{\mathrm{CP}}),\, T_{\mathrm{CP}}\leq t-lT<T;
	\end{cases}
\end{align}
where $T=T_{\mathrm{CP}}+T_{\mathrm{D}}$ is the total duration, including the CP duration $T_{\mathrm{CP}}$. The complete transmitted baseband signal formed by $L$ consecutive OFDM symbols is expressed as $x(t)=\sum_{l=0}^{L-1}{x}_l(t)$. 
Before transmission, the baseband OFDM signal is up-converted to the RF domain.

%
%

\subsubsection{Received echo signal model}
The transmitted OFDM signal ${x}(t)$ is reflected by targets, and the echo signal is then received. After down-conversion at the receiver, the received baseband signal at time $t$ is expressed as
\begin{align}\label{eq:sensingsignaly1}
	{y}(t)=\sum_{q=1}^{Q}\beta_q{x}(t-\tau_q)\mathrm{e}^{\mathrm{j} 2\mathrm{\uppi}f_{\mathrm{d},q}t}+{z}(t),
\end{align}
where $Q$ is the number of targets. $\beta_q$, $\tau_q$, and $f_{\mathrm{d},q}$ are the reflected complex channel gain, delay, and Doppler of the $q$-th path, respectively. ${z}(t)$ is complex Gaussian white noise. It is worth noting that the Doppler shift originating from moving targets is assumed to be identical within a symbol. The received discrete-time signal of the $l$-th symbol at the $n$-th sampling point with sampling interval $T_{\mathrm{s}}$ is written as\footnote[1]{According to the Nyquist sampling theorem and engineering requirements, $T_{\mathrm{s}}$ is typically set as $1/K\Delta_\mathrm{f}$.}
\begin{align}\label{eq:sensingsignalyd1}
	{y}_l[n]=\sum_{q=1}^{Q}\beta_q{x}_l[n-n_q]\mathrm{e}^{\mathrm{j} 2\mathrm{\uppi}f_{\mathrm{d},q}lT}+{z}[n],\, 0\leq n-lN<N, 
\end{align}
where $n=t/T_{\mathrm{s}}$, $n_q=\tau_q/T_{\mathrm{s}}$, $N=T/T_{\mathrm{s}}$, and ${x}_l[n]$ is the discrete-time form of the transmitted baseband signal in (\ref{eq:sensingsignalx1}).

\subsubsection{CP removal}
In accordance with the standardized OFDM reception procedure, the first $N_\mathrm{CP}$ sampling points corresponding to the CP shall be removed, where $N_\mathrm{CP}=T_\mathrm{CP}/T_{\mathrm{s}}$. After CP removal, the received signal of the $l$-th symbol from the $q$-th path is expressed as
\begin{align}\label{eq:xlq1}
	{\mathbf{y}}_{l,q}=\beta_q{\mathbf{x}}_{l,q}\mathrm{e}^{\mathrm{j} 2\mathrm{\uppi}f_{\mathrm{d},q}lT}\in\mathbb{C}^{N_\mathrm{D}\times 1},
\end{align}
where $N_\mathrm{D}=T_\mathrm{D}/T_{\mathrm{s}}$ and $N_\mathrm{CP}+N_\mathrm{D}=N$. Under ideal conditions $n_q<N_\mathrm{CP}$, we have
\begin{align}\label{eq:ylq2}
	{\mathbf{x}}_{l,q}=&\left[{s}[N_\mathrm{D}-n_q],...,{s}[N_\mathrm{D}-1],\right.\\&\left.{s}[0],...,{s}[N_\mathrm{D}-1-n_q]\right]^{\mathsf{T}},\nonumber
\end{align}
where ${s}[n]$ is the discrete-time form of the OFDM symbol in (\ref{eq:sensingsignals1}).
\begin{remark}
	The signal ${\mathbf{x}}_{l,q}$ is a cyclically shifted version of the transmitted pilot ${\mathbf{s}}=[{s}[0],...,{s}[N_\mathrm{D}-1]]^{\mathsf{T}}$, thereby leaving the structure of the frequency-domain pilot intact.
\end{remark}

\subsubsection{Pilot unloading}
 To extract the target sensing information from the channel response, it is necessary to perform pilot unloading analogous to pilot-aided channel estimation. After the Fourier transform of ${\mathbf{y}}_{l}=\sum_{q=1}^{Q}{\mathbf{y}}_{l,q}$, the ideal signal on the $k$-th subcarrier of the $l$-th symbol in the frequency domain is derived as
\begin{align}\label{eq:sensingsignalY1}
	{{Y}}_{k,l}=\sum_{q=1}^{Q}\beta_q S_k\mathrm{e}^{-\mathrm{j} 2\mathrm{\uppi}\tau_{q}k\Delta_\mathrm{f}}\mathrm{e}^{\mathrm{j} 2\mathrm{\uppi}f_{\mathrm{d},q}lT}+{{Z}}_{k,l}.
\end{align}
Then, we perform pilot unloading to obtain channel state information (CSI) as
\begin{align}\label{eq:sensingsignalH1}
	{{H}}_{k,l}={{Y}}_{k,l}/{{S}}_k=\sum_{q=1}^{Q}\beta_q\mathrm{e}^{-\mathrm{j} 2\mathrm{\uppi}\tau_{q}k\Delta_{\mathrm{f}}}\mathrm{e}^{\mathrm{j} 2\mathrm{\uppi}f_{\mathrm{d},q}lT}+\bar{{Z}}_{k,l},
\end{align}
where $\bar{{Z}}_{k,l}={{Z}}_{k,l}/{{S}}_{k}$. This process is analogous to least squares channel estimation in communication systems. 
\begin{remark}
	Subsequently, CSI is exploited to obtain the RV spectrum by two-dimensional discrete Fourier transform (2D-DFT) and enable target sensing \cite{Shi2022Device}.
\end{remark}

\subsection{Artifacts Origin Analysis}\label{sec:OriginAnalysis}
In real-world systems, non-ideal impairments introduce multiplicative noise by affecting the amplitude and phase of the received signal in the time domain, which is modeled as
\begin{align}\label{appex:appexl0k}
	\tilde{y}_l[n]={g}_l[n]{y}_l[n],
\end{align}
where ${g}_l[n]$ and ${y}_l[n]$ are the multiplicative noise and the ideal received signal of the $n$-th sampling point at the $l$-th symbol, respectively. The multiplicative noise can induce striped artifacts in the RV spectrum, as shown in Fig. \ref{fig:artifacts1}, which are formulated by the following theorems.

\begin{theorem}[velocity/Doppler domain spreading]
	\label{th:period1}
	When utilizing the OFDM signal for sensing, the time-domain multiplicative noise $\{{g}_l[n]\}_{l=0,..,L-1}$ causes spreading in the velocity spectrum. This spreading leads to an irregular rise of the noise floor across the entire velocity spectrum.
\end{theorem}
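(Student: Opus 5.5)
The plan is to track the multiplicative noise $g_l[n]$ through the same chain of operations that produced the ideal CSI in (\ref{eq:sensingsignalH1}): CP removal, the per-symbol DFT, pilot unloading, and finally the DFT across the slow-time index $l$ that produces the velocity spectrum. For the velocity domain, the relevant structure is how the distortion varies from one symbol to the next. Within a symbol, the fast-time behaviour of $g_l[n]$ only matters to the extent that it acts as a per-symbol weight.

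\textbf{Step 1: post-DFT model.} After CP removal, the distorted vector of the $l$-th symbol is the element-wise product $\tilde{\mathbf{y}}_l=\mathbf{g}_l\odot\mathbf{y}_l$. By the DFT convolution theorem its frequency-domain samples are
\begin{align}
\tilde{Y}_{k,l}=\frac{1}{N_\mathrm{D}}\sum_{k'}G_{k-k',l}\,Y_{k',l},
\end{align}
where $G_{\cdot,l}$ is the DFT of $\mathbf{g}_l$. Split this into a diagonal term $G_{0,l}Y_{k,l}/N_\mathrm{D}$, where $G_{0,l}$ is the mean of $g_l[n]$ over the symbol, and an off-diagonal inter-carrier term. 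Dividing by $S_k$ then gives
\begin{align}
\tilde{H}_{k,l}=\bar g_l H_{k,l}+E_{k,l},
\end{align}
with $\bar g_l=G_{0,l}/N_\mathrm{D}$ a symbol-dependent complex weight. The residual is
\begin{align}
E_{k,l}=\frac{1}{N_\mathrm{D}}\sum_{k'\neq k}G_{k-k',l}S_{k'}Y_{k',l}/S_k .
\end{align}

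\textbf{Step 2: slow-time DFT.} Take the DFT over $l$ at a fixed subcarrier (or after the range DFT at a fixed range bin). The target term $\beta_q \mathrm{e}^{\mathrm{j}2\uppi f_{\mathrm{d},q}lT}$ is now multiplied by $\bar g_l$. By the convolution theorem in the slow-time direction, the velocity profile becomes the ideal Doppler peak convolved with $\mathcal{F}\{\bar g_l\}$. Since $g_l$ is non-constant and stochastic across symbols (phase noise, oscillator drift, asynchronism), $\mathcal{F}\{\bar g_l\}$ is not a delta. Its energy is distributed over all Doppler bins, which is exactly the spreading claimed in Theorem~\ref{th:period1}.

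\textbf{Step 3: noise-floor claim.} To show that the floor rises across the entire velocity axis, write $\bar g_l=\mathbb{E}[\bar g]+\delta_l$. The mean part reproduces the ideal peak. For the fluctuation part, compute the expected squared magnitude of its DFT in each bin; this equals the power spectral density of $\delta_l$ scaled by $|\beta_q|^2$. For white or broadband fluctuations the expected value is roughly constant across bins, equal to $L\,\mathrm{Var}(\bar g)|\beta_q|^2$. Any single realization, however, gives an irregular profile. The ICI term $E_{k,l}$ inherits a similar random $l$-dependence and adds further broadband power.

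\textbf{Main obstacle.} The difficult step is making Step 3 rigorous without a concrete statistical model of $g_l[n]$. I would first assume $g_l[n]=a_l\mathrm{e}^{\mathrm{j}\phi_l[n]}$ with small phase perturbations that are independent, or weakly correlated, across symbols. Under that assumption, a first-order expansion $\mathrm{e}^{\mathrm{j}\phi}\approx 1+\mathrm{j}\phi$ gives an explicit broadband spectrum. I would then remark that any non-degenerate random process in $l$ yields a non-delta spectrum, and hence some level of spreading.
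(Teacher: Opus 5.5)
Your Step 2 is the paper's argument in Appendix A: a per-symbol complex weight multiplies the slow-time CSI, so the Doppler spectrum becomes the ideal peak circularly convolved with the DFT of the weight sequence, $\tilde H[\upsilon]=\frac{1}{L}G[\upsilon]\ast H[\upsilon]$.

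Your route differs in two places. First, the paper assumes $g_l[n]=g_l$ for all $n$ (with $\beta=1$, $\tau=0$, one path). Then $\tilde h_l=g_lh_l$ follows at once and there is no ICI. You instead keep the fast-time variation: the symbol mean $\bar g_l=G_{0,l}/N_\mathrm{D}$ multiplies the ideal CSI, and the rest goes into an ICI residual. When $g_l[n]\equiv g_l$, $G_{k,l}=N_\mathrm{D}g_l$ at $k=0$ and vanishes elsewhere, so $E_{k,l}=0$ and you recover the paper's relation exactly. The payoff is that the Doppler spreading at the true range bin is seen to depend only on the symbol-averaged impairment. You also expose that the ICI terms carry their own slow-time modulation through $G_{\pm m,l}$; Appendix B places these terms at displaced ranges, and this is exactly the range--Doppler coupling behind the stripes. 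Second, your Step 3 actually justifies ``across the entire velocity spectrum'': white fluctuations give a flat expected floor $L\,\mathrm{Var}(\bar g)|\beta_q|^2$, irregular in each realization. The paper only asserts this.

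Three corrections:
\begin{itemize}
\item The residual should read $E_{k,l}=\frac{1}{N_\mathrm{D}}\sum_{k'\neq k}G_{k-k',l}Y_{k',l}/S_k$, or equivalently with $S_{k'}H_{k',l}$ in place of $Y_{k',l}$. The extra factor $S_{k'}$ is a typo.
\item ``Non-constant, hence $\mathcal{F}\{\bar g_l\}$ is not a delta'' is false. An on-grid linear phase ramp $\bar g_l\propto\mathrm{e}^{\mathrm{j}2\uppi rl/L}$, such as a residual CFO with $f_\mathrm{o}LT$ an integer, gives a pure Doppler shift with no spreading. This is the slow-time analogue of the paper's Remark on integer-multiple CFO. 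So the stochastic assumption of Step 3 is needed for the spreading claim itself, not only for the shape of the floor. The paper's proof glosses over this as well.
\item Because you retain $E_{k,l}$, you should say why it cannot cancel the diagonal term's spreading at the target's range bin. One way is to note that for ZC pilots the dominant ICI orders land at $\tau_q\pm\mu m/(K\Delta_\mathrm{f})$. Alternatively, specialize to $g_l[n]=g_l$ as the paper does.
\end{itemize}
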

\begin{IEEEproof}
	The detailed proof is provided in Appendix \ref{AppendixA}.
\end{IEEEproof}
\begin{theorem}[range/delay domain extension]
	\label{th:period2}
	When utilizing the SRS with the ZC sequence for sensing, the time-domain multiplicative noise $\{{g}_l[n]\}_{n=0,..,N-1}$ introduces extensions of the range spectrum. These extensions exhibit the following two characteristics. First, they extend bidirectionally from the center position $c_0\tau_{q}/2$ corresponding to the actual range peak, where $c_0$ is the speed of light. Second, the temporal positions of these extensions demonstrate periodicity with a period of ${c_0\mu}/{2K\Delta_{\mathrm{f}}}$.
\end{theorem}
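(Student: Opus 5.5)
The plan is to follow the multiplicative-noise model \eqref{appex:appexl0k} through the receiver chain for a single symbol $l$, that is, CP removal, DFT, pilot unloading and then the range IDFT. I will show that the quadratic phase of the ZC pilot turns frequency-domain leakage into discrete shifts in delay. Throughout I fix $l$ and drop the Doppler factor $\mathrm{e}^{\mathrm{j}2\uppi f_{\mathrm{d},q}lT}$, since it is constant within a symbol. By linearity it suffices to treat one path $q$ and ignore noise. I assume, as in the ideal model, that $n_q<N_\mathrm{CP}$, so by the first Remark the post-CP vector is a cyclic shift of $\mathbf{s}$.

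\textbf{Step 1: multiplication becomes circular convolution.} Let $G_m$ denote the $N_\mathrm{D}$-point DFT of $\{g_l[n]\}$ over the CP-free window, with the index $m$ taken cyclically in $\{-\lceil K/2\rceil+1,\dots,\lfloor K/2\rfloor\}$ (using $N_\mathrm{D}=K$ from the footnote). The element-wise product in time then gives
\begin{align*}
\tilde Y_{k}=\sum_{m} G_m\, Y_{k-m}=\beta_q\sum_m G_m S_{k-m}\mathrm{e}^{-\mathrm{j}2\uppi\tau_q(k-m)\Delta_\mathrm{f}},
\end{align*}
with subcarrier indices taken modulo $K$. Here $Y_k$ is the ideal signal in \eqref{eq:sensingsignalY1}.

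\textbf{Step 2: pilot unloading with the ZC structure. This is the key step.} Dividing by $S_k$ requires the ratio $S_{k-m}/S_k$. From \eqref{eq:bk} we have $(k-m)(k-m+1)-k(k+1)=-2km+m^2-m$, so
\begin{align*}
\frac{S_{k-m}}{S_k}=\mathrm{e}^{\mathrm{j}2\uppi\mu km/K}\,\mathrm{e}^{-\mathrm{j}\uppi\mu(m^2-m)/K}.
\end{align*}
The second factor is independent of $k$. Using $(k-m)(k-m+1)\equiv$ the cyclic value modulo $2K$, the ZC sequence (with $K$ odd, as for prime-length ZC) is genuinely $K$-periodic. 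I expect the need to check this periodicity carefully to be the main technical obstacle, since otherwise the wrap-around terms pick up an extra sign. Substituting gives
\begin{align*}
\tilde H_k=\beta_q\sum_m \underbrace{G_m\mathrm{e}^{-\mathrm{j}\uppi\mu(m^2-m)/K}\mathrm{e}^{\mathrm{j}2\uppi\tau_q m\Delta_\mathrm{f}}}_{\alpha_m}\,\mathrm{e}^{-\mathrm{j}2\uppi k\Delta_\mathrm{f}\left(\tau_q-\frac{\mu m}{K\Delta_\mathrm{f}}\right)}.
\end{align*}
Thus each leakage coefficient $G_m$ becomes a pure complex exponential in $k$ with effective delay $\tau_q-\mu m/(K\Delta_\mathrm{f})$.

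\textbf{Step 3: range spectrum.} Applying the range IDFT over $k$ as in the second Remark, each term produces a peak at range $c_0\tau_q/2-m\,c_0\mu/(2K\Delta_\mathrm{f})$, taken modulo the unambiguous range $c_0/(2\Delta_\mathrm{f})$, with amplitude $|\beta_q\alpha_m|=|\beta_q G_m|$. The $m=0$ term is the true peak. When $g_l$ is close to constant, this term dominates because $|G_0|$ is then much larger than the other coefficients.

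The terms with $m\neq0$ are the extensions, and they give the two claimed characteristics. First, the index $m$ runs over both positive and negative values, and $G_m$ is generically nonzero on both sides because $g_l$ is a non-ideal amplitude and phase perturbation with a broadband spectrum. Hence the replicas extend bidirectionally from the centre $c_0\tau_q/2$. Second, consecutive replicas are separated by exactly $c_0\mu/(2K\Delta_\mathrm{f})$, which gives the stated periodicity. Finally, I will remark that without the ZC quadratic phase, the ratio $S_{k-m}/S_k$ would not be a linear exponential in $k$. The leakage would then spread diffusely rather than forming periodic replicas, which is why the phenomenon is tied to polyphase codes.
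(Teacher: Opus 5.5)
\paragraph{Verdict.} Your argument follows the same chain as the paper's proof in Appendix~\ref{AppendixB}, but it leaves one genuine gap: it only covers odd $K$. The shared chain is: the time-domain product becomes a frequency-domain convolution, pilot unloading produces $S_{k-m}/S_k$, the quadratic ZC phase makes that ratio a single linear-phase term in $k$, and the range IDFT places the $m$-th ICI term at $\tau_q\mp\mu m/(K\Delta_{\mathrm f})$. Your Step~2 ratio and effective delay agree with the paper's $\delta_\tau^{\mathrm{ZC}}[m]$.

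\paragraph{The gap.} It is the point you flagged and then assumed away. Because you reduce subcarrier indices modulo $K$, you need the closed form \eqref{eq:bk} to coincide with its $K$-periodic extension. In fact $S_{k+K}=S_k\,\mathrm{e}^{-\mathrm{j}\uppi\mu(K+1)}$, which equals $-S_k$ for even $K$ and odd $\mu$. In \eqref{eq:bk} the ZC length is $K$ itself, not a prime $N_{\mathrm{ZC}}$. The paper's own configurations ($K=1584$ and $K=1024$ with $\mu=51$) have even $K$, so ``$K$ odd, as for prime-length ZC'' excludes exactly the case the theorem is meant to cover. For even $K$ and odd $\mu$, your displayed $\tilde H_k$ is wrong on the wrapped subband of $|m|$ subcarriers: there the $m$-th term carries an extra factor $(-1)^{\mu}$.

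\paragraph{Repair.} Either of two fixes works.
\begin{enumerate}
\item Adopt the paper's bookkeeping. Write the circular convolution as $\frac{1}{N}\sum_{m=-K+1}^{K-1}G_mY_{k-m}$, with $G$ extended periodically and $Y$ used only for $0\le k-m\le K-1$. The pilot ratio then only involves in-range indices, so no periodicity of $S$ is needed. Each ICI order contributes a linear-phase exponential over a truncated band ($k\in[m,K-1]$ or $k\in[0,K-m-1]$), i.e.\ a slightly broadened kernel with the same centre.
\item Keep your modular indexing and observe that the wrap-around factor is constant on the wrapped subband. It therefore changes only the kernel's shape and amplitude (for odd $\mu$ the peak value is proportional to $K-2|m|$ instead of $K$), never its centre. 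Bidirectionality and the period $c_0\mu/(2K\Delta_{\mathrm f})$ survive either way.
\end{enumerate}
For odd $K$ your version is slightly cleaner than the paper's, since each term is a full-band exponential landing exactly on bin offset $-\mu m \bmod K$.

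\paragraph{Closing remark.} Your final remark is too strong. Appendix~\ref{AppendixB} shows that every quadratic-phase code (P3, P4) behaves like ZC. Appendix~\ref{AppendixC} shows that cyclic matrix codes (Frank, P1, P2) also produce discrete shifted replicas, on a grid of spacing $1/(N_{\mathrm C}\Delta_{\mathrm f})$. So ``non-ZC pilots give diffuse leakage'' does not hold. This does not affect the proof itself.
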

\begin{IEEEproof}
	The detailed proof is provided in Appendix \ref{AppendixB}.
\end{IEEEproof}

These theorems reveal the velocity-domain spreading and range-domain extension induced by multiplicative noise, respectively.  {We also derive the impact of multiplicative noise on cyclic matrix-polyphase codes in Appendix~\ref{AppendixC}.}  

{The theoretical analysis applies when the non-ideal impairment can be represented as a time-domain pointwise complex multiplication. During one coherent processing interval, the delay and Doppler of each propagation path are assumed to be approximately constant. The pilot phase structure is unchanged across symbols. Under these conditions, \textit{Theorems}~\ref{th:period1} and~\ref{th:period2} characterize the Doppler-domain spreading and the code-dependent range-domain extension, respectively. Their coupling produces the observed RV-domain artifacts.}

{
\begin{remark}
	 Under multi-target, multipath, or strong-reflector propagation, the striped artifacts from individual paths superpose linearly. The resulting artifacts, typically dominated by strong static paths, appear across the entire RV spectrum and consequently obscure weak targets.
\end{remark}
}

Here, we primarily focus on the following two types of non-ideal impairments as examples: time-domain truncation and carrier frequency offset (CFO).
\subsubsection{Analysis of time-domain truncation}

In communication systems, the CP-OFDM waveform can effectively mitigate inter-symbol interference (ISI) and ICI caused by multipath effects. However, when performing long-range target sensing, the current CP length is entirely insufficient to meet the requirements for target sensing at ranges of hundreds of meters or even kilometers.\footnote[2]{In 5G NR specifications, the CP length for a waveform with $\Delta_\mathrm{f}=120\,$kHz is $T_\mathrm{CP}=0.59$ $\mu$s, which supports a maximum unambiguous range of only $88.5$ m for round-trip propagation in target sensing.} 
As shown in Fig. \ref{fig:InCPCFO}(a), even after CP removal, the residual component of the $(l-1)$-th OFDM symbol leaking into the $l$-th symbol cannot be completely eliminated. 
Therefore, ISI and ICI are introduced due to the insufficient CP length.

\begin{figure*}[]
	\centerline{\includegraphics[width=1.1\linewidth]{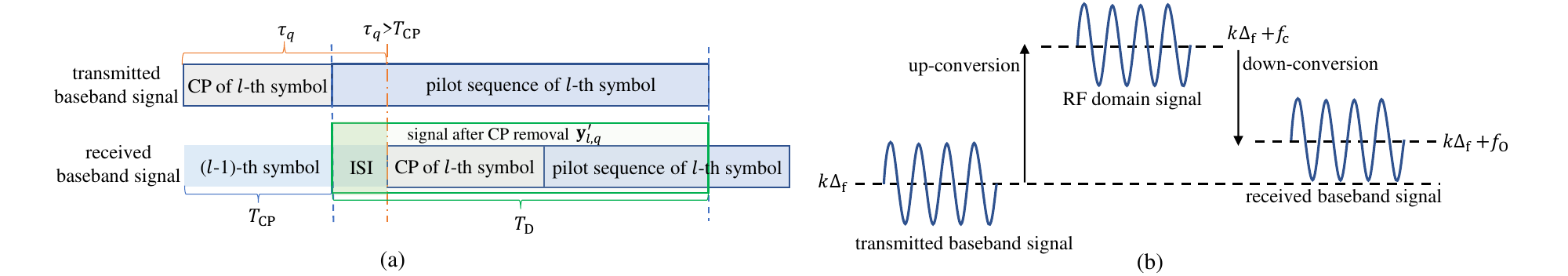}}
	\caption{\textmd{Illustration of the received baseband signal with non-ideal impairments. (a) time-domain truncation. (b) carrier frequency offset.}}\label{fig:InCPCFO}
	\vspace{-1em}
\end{figure*}

Subsequently, we analyze the impact of insufficient CP length on target sensing. Recalling the CP removal process in (\ref{eq:xlq1})--(\ref{eq:ylq2}),
when the delay of $q$-th path exceeds the CP length, i.e. $N_\mathrm{CP}<n_q<N$, the received signal of the $l$-th symbol from the $q$-th path is rewritten as
\begin{align}\label{eq:tlqp}
	\tilde{\mathbf{y}}^{\mathrm{\prime}}_{l,q}=\beta_q\left({\mathbf{x}}^{\mathrm{\prime}}_{l-1,q}+{\mathbf{x}}^{\mathrm{\prime}}_{l,q}\right)\mathrm{e}^{\mathrm{j} 2\mathrm{\uppi}f_{\mathrm{d},q}lT}\in\mathbb{C}^{N_\mathrm{D}\times 1},
\end{align}
where 
\begin{align}\label{eq:ylq31}
	{\mathbf{x}}^{\mathrm{\prime}}_{l-1,q}=\left[{x}_{l-1}[N-\Delta n],...,{x}_{l-1}[N-1],0,...,0\right]^{\mathsf{T}},
\end{align}
\begin{align}\label{eq:ylq32}
	{\mathbf{x}}^{\mathrm{\prime}}_{l,q}=\left[0,...,0,{x}_l[0],...,{x}_l[N-1-n_q]\right]^{\mathsf{T}},
\end{align}
and $\Delta n=n_q-N_\mathrm{CP}>0$. 
In this case, the ISI $\mathbf{x}_{l-1,q}^{\prime}$ from $(l-1)$-th symbol is introduced into the 
$l$-th symbol. 

Meanwhile, the $l$-th symbol $\mathbf{x}_{l,q}^{\prime}$ is truncated in the time domain and consequently ceases to be a cyclic shift. As a result, its frequency-domain structure is disrupted by ICI, and (\ref{eq:sensingsignalY1}) is no longer satisfied. 
Subsequently, we analyze the effects of time-domain truncation on the frequency domain. 
According to (\ref{eq:ylq32}), the received signal with time-domain truncation is expressed as
\begin{align}\label{eq:ylq1111}
	\tilde{\mathbf{y}}_{l,q}^{\mathrm{\prime(Trunc)}}=\beta_q{\mathbf{x}}^{\mathrm{\prime}}_{l,q}\mathrm{e}^{\mathrm{j} 2\mathrm{\uppi}f_{\mathrm{d},q}lT}\in\mathbb{C}^{N_\mathrm{D}\times 1},
\end{align}
where the non-ideal pilot $\mathbf{x}^{\mathrm{\prime}}_{l,q}$ is given as
\begin{align}\label{eq:xpp}
	{x}^{\mathrm{\prime}}_{l,q}[n]={g}^{\mathrm{\prime}}[n]{x}_{l,q}[n],\quad n=0,...,N_\mathrm{D}-1.
\end{align}
and
\begin{align}\label{eq:xl02} {g}^{\mathrm{\prime}}[n]=
	\begin{cases}
		0, & \quad 0\leq n<\Delta n; \\
		1, & \quad \Delta n\leq n < N_\mathrm{D}.\\
	\end{cases}
\end{align}
Following a similar derivation in Appendix \ref{AppendixB}, it can be concluded that time-domain truncation induces periodic extensions in the range domain. Therefore, the extensions $\boldsymbol{\xi}^{\mathrm{\prime}}_{m,l,q}$ caused by the $m$-th ICI ($m>0$) from the $q$-th path are modeled as
\begin{align}\label{eq:Xii}
&\xi^{\mathrm{\prime}}_{m,l,q}[d]=\frac{\beta_q\mathrm{e}^{\jmath 2\mathrm{\uppi}f_{\mathrm{d},q}lT}}{N_\mathrm{D}}
\\
&\times\left({G}_m^{\mathrm{\prime}}\mathrm{e}^{\mathrm{j} 2\mathrm{\uppi}\frac{\mu m(m-1)/2}{K}}\sum_{k=m}^{K-1}\mathrm{e}^{\mathrm{j} 2\mathrm{\uppi}(\tau[d]-\tau_{q}-\delta_\tau[m])k\Delta_{\mathrm{f}}}\right.\nonumber\\&+\left.{G}^{\mathrm{\prime}}_{-m}\mathrm{e}^{\mathrm{j} 2\mathrm{\uppi}\frac{\mu m(m+1)/2}{K}}\sum_{k=0}^{K-m-1}\mathrm{e}^{\mathrm{j} 2\mathrm{\uppi}(\tau[d]-\tau_{q}+\delta_\tau[m])k\Delta_{\mathrm{f}}}\right),\nonumber
\end{align}
where $\tau[d]$ is the delay and $d$ is the corresponding index in the delay domain, and
\begin{align}\label{eq:deltad}
\delta_\tau[m]=\frac{\mu m}{K\Delta_{\mathrm{f}}},\,m>0
\end{align}
and
\begin{align}
	&{G}^{\mathrm{\prime}}_k=\mathcal{F}\{{g}^{\mathrm{\prime}}[n]\}=\sum_{n=0}^{N_\mathrm{D}-1}{g}^{\mathrm{\prime}}[n]\mathrm{e}^{-\mathrm{j} 2\uppi nk/ N_\mathrm{D}}\\
	&=
	\begin{cases}
		N_\mathrm{D}-\Delta n & ,\quad k=0,\nonumber \\
		(-1)^k\mathrm{e}^{-\mathrm{j} \uppi k (\Delta n-1)/ N_\mathrm{D}}\frac{\sin({\uppi k (N_\mathrm{D}-\Delta n)/ N_\mathrm{D}})}{\sin(\uppi k/ N_\mathrm{D})} & .\quad k\neq0.\nonumber
	\end{cases}
\end{align}
At this point, the range-domain extension caused by time-domain truncation has been derived. When coupled with velocity-domain spreading, it appears as striped artifacts in the RV spectrum.


\subsubsection{Analysis of CFO}
In practical systems, the instability of the local oscillators leads to carrier frequency mismatch between the transmitter and receiver, resulting in CFO as shown in Fig. \ref{fig:InCPCFO}(b).\footnote[3]{In a mmWave system with a carrier frequency of $f_{\mathrm{c}}=25.6$ GHz, the CFO $f_{\mathrm{o}}$ typically remains below $5$ kHz. Compared to communication, the sensing performance suffers more severe degradation.}

Recalling the signal under ideal assumptions in (\ref{eq:xlq1}), the received pilot signal with CFO from the $q$-th path is expressed as
\begin{align}\label{eq:ylq111}
\tilde{\mathbf{y}}_{l,q}^{\mathrm{\prime\prime}}=\beta_q{\mathbf{x}}^{\mathrm{\prime\prime}}_{l,q}\mathrm{e}^{\mathrm{j} 2\mathrm{\uppi}f_{\mathrm{d},q}lT}\in\mathbb{C}^{N_\mathrm{D}\times 1},
\end{align}
where the non-ideal pilot ${\mathbf{x}}^{\mathrm{\prime\prime}}_{l,q}$ is given as
\begin{align}\label{eq:xlq111}
{{x}}^{\mathrm{\prime\prime}}_{l,q}[n]=g^{\mathrm{\prime\prime}}[n]{{x}}_{l,q}[n],\quad n=0,...,N_\mathrm{D}-1
\end{align}
and 
\begin{align}
g^{\mathrm{\prime\prime}}[n]=\mathrm{e}^{\mathrm{j} 2\mathrm{\uppi}(n+N_\mathrm{CP})f_{\mathrm{o}}T_{\mathrm{s}}},
\end{align}
where $f_{\mathrm{o}}$ is the CFO. 

Following a similar derivation in Appendix \ref{AppendixB}, it can be concluded that CFO induces periodic extensions in the range domain. The extensions $\boldsymbol{\xi}^{\mathrm{\prime\prime}}_{m,l,q}$ caused by the $m$-th ICI from the $q$-th path are modeled similarly to (\ref{eq:Xii})--(\ref{eq:deltad}), and we have
\begin{align}\label{eq:gprim}
	&{G}^{\mathrm{\prime\prime}}_k=\mathcal{F}\{g^{\mathrm{\prime\prime}}[n]\}=\sum_{n=0}^{N_\mathrm{D}-1}{g}^{\mathrm{\prime\prime}}[n]\mathrm{e}^{-\mathrm{j} 2\uppi nk/ N_\mathrm{D}}\\&=\mathrm{e}^{\mathrm{j} \mathrm{\uppi}(2N_\mathrm{CP}+N_\mathrm{D}-1)f_{\mathrm{o}}T_{\mathrm{s}}}\mathrm{e}^{\mathrm{j} \uppi k/ N_\mathrm{D}}\frac{\sin({\uppi f_{\mathrm{o}}T_{\mathrm{s}} N_\mathrm{D}})}{\sin(\uppi (f_{\mathrm{o}}T_{\mathrm{s}}N_\mathrm{D}-k)/ N_\mathrm{D})}.\nonumber
\end{align}
It can be observed that the extension originating from CFO is consistent with that resulting from time-domain truncation. When coupled with velocity-domain spreading, it also appears as striped artifacts in the RV spectrum.
\begin{remark}
Note that when $f_{\mathrm{o}}=i/(T_{\mathrm{s}} N_\mathrm{D})=i\Delta_{\mathrm{f}}$, (\ref{eq:gprim}) will no longer hold, where $i$ denotes an integer. Under this condition, we have
\begin{align}
	{G}^{\mathrm{\prime\prime}}_k=
	\begin{cases}
		N_{\mathrm{D}}\mathrm{e}^{\mathrm{j} 2\uppi i N_\mathrm{CP}/ N_\mathrm{D}} & ,\quad k\equiv i\,(\text{mod}\, N_\mathrm{D}),\nonumber \\
		0 & ,\quad \text{others}.\nonumber
	\end{cases}
\end{align} 
This demonstrates that when the CFO is an integer multiple of the subcarrier spacing, the range domain does not exhibit extension, but only a shift in the spectrum peak.
\end{remark}



\subsection{Problem Formulation}
The foregoing analysis has identified the origin of the artifacts, and it is conclusive that such interference inevitably impairs target sensing. As shown in Fig. \ref{fig:artifacts1}, these artifacts result in the masking of weak targets and false alarms in the RV spectrum. Let ${{X}}[d,v]$ be the amplitude of the RV spectrum, which is expressed as
\begin{align}\label{eq:RV0}
	{{X}}[d,v]=\alpha[d,v]{{X}}^{\rm{(0)}}[d,v]+\xi[d,v],
\end{align}
where $\boldsymbol{{{X}}}^{\rm{(0)}}$, $\boldsymbol{\alpha}$, and $\boldsymbol{\xi}$ are the ideal RV spectrum, multiplicative interference, and additive interference, respectively. $d$ and $v$ are the indices of range and velocity, respectively. Therefore, our task is to reconstruct $\boldsymbol{{{X}}}^{\rm{(0)}}$ from the observed $\boldsymbol{{{X}}}$, thereby mitigating the impact of artifacts caused by interference.
However, these impairments are often inconspicuous and time-varying, posing significant challenges to accurate estimation and compensation for $\boldsymbol{\alpha}$ and $\boldsymbol{\xi}$ by conventional approaches. 

It is worth noting that we have converted the sensing signal from the time-frequency domain into the RV domain. This transformation has significant implications. Established computer vision techniques, such as CNNs, can be utilized for effective denoising and reconstruction, capitalizing on their inherent strengths in processing image-like data. Furthermore, the derived origin and characteristics of the artifacts inform the design of a data-physics-driven artifact elimination method that leverages the physical properties inherent in the RV spectrum.
\begin{figure*}[b]
	\vspace{-0.1cm}
	\centerline{\includegraphics[width=0.87 \linewidth]{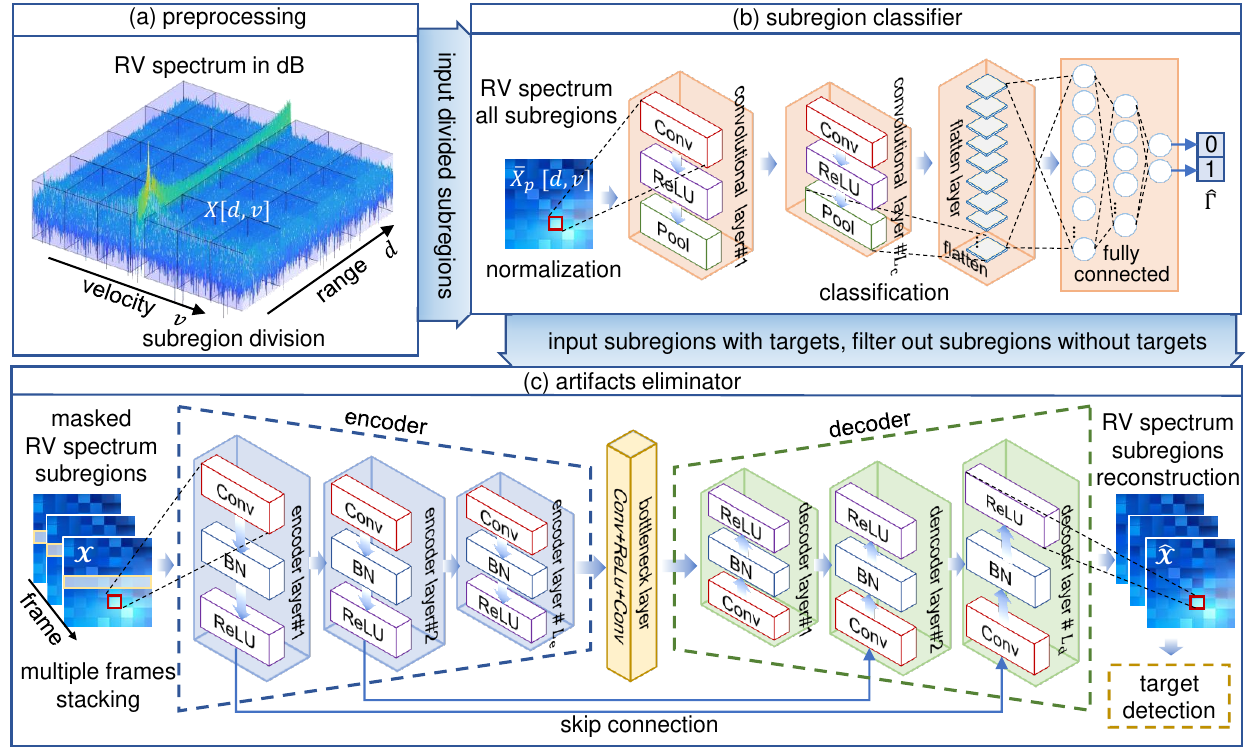}}
	\caption{PIAENet for target detection in the RV spectrum with real-world artifacts. (a) preprocessing module. (b) subregion classifier. (c) artifact eliminator.}\label{fig:frameworkl}
\end{figure*}
\section{Artifact Elimination Methodology}\label{sec:Elimination}
Conventional approaches usually struggle to address the challenges posed by a multitude of real-world phase errors that are time-varying and inconspicuous. In this article, we propose PIAENet, a new data-physics-driven framework for target detection while accounting for the presence of artifacts. The proposed framework, as visualized in Fig. \ref{fig:frameworkl}, comprises three cascade-connected modules: a preprocessing module, a subregion classifier, and an artifact eliminator.

In the subsequent subsections, we provide a comprehensive description of each module within the proposed framework.
\subsection{Preprocessing Module}
In sensing systems, the direct path between the transmitter and receiver or strong multipath components in the environment often exhibits extremely high amplitudes. 
The amplitude of the strong multipath component is more than $60$ dB higher than that of other paths, as shown in Fig. \ref{fig:artifacts1}. Therefore, if the entire RV spectrum is used as input, the neural network can extract only the features of the strong multipath components, while failing to preserve the features of weak targets. 

To enhance the sensitivity to weak targets, we divide the RV spectrum into multiple subregions. The $p$-th subregion is described as
\begin{align}\label{eq:RVsub}
	{{X}}_p[d,v]={{X}}[d,v],\, d\in \mathbf{d}_p,\, v\in \mathbf{v}_p, 
\end{align}
where $\mathbf{d}_p\in\mathbb{R}^{K'\times 1}$ and $\mathbf{v}_p\in\mathbb{R}^{L'\times 1}$ are the range and velocity bins of $p$-th subregion. These strong multipath components tend to appear around the bins of zero-range or zero-velocity. To ensure coverage of all bins except those at zero-velocity or zero-range, $\mathbf{d}_p$ and $\mathbf{v}_p$ must satisfy
\begin{subequations}
	\begin{align}\label{eq:RVcub}
		&\bigcup_{p=1}^P\mathbf{d}_p=[d_{\mathrm{min}},...,d_{\mathrm{max}}]^{\mathsf{T}}, \\ &\bigcup_{p=1}^P\mathbf{v}_p=[-v_{\mathrm{max}},...,-v_{\mathrm{min}},v_{\mathrm{min}},...,v_{\mathrm{max}}]^{\mathsf{T}},
	\end{align}
\end{subequations}
where $d_{\mathrm{min}}>\frac{c_0}{2K\Delta_{\mathrm{f}}}$, $d_{\mathrm{max}}\leq\frac{c_0}{2\Delta_{\mathrm{f}}}$, $v_{\mathrm{min}}>\frac{\lambda}{4LT}$ and $v_{\mathrm{max}}\leq\frac{\lambda}{4T}$. 
{The moving target indication (MTI) algorithm can also suppress part of the static strong-path interference. However, its energy spread into nonzero-Doppler bins still degrades sensing performance.}

Furthermore, the amplitude of each RV bin in an individual subregion exhibits a large dynamic range. 
%
%
Therefore, we perform normalization to confine the amplitude within the range of $0$ to $1$. For the $p$-th subregion $\boldsymbol{{{X}}}_p\in\mathbb{R}^{K'\times L'}$, the normalization process is expressed as
\begin{align}\label{eq:RVnor}
	\bar{{{X}}}_p[d,v]={{X}}_p[d,v]/a_p,\,
\end{align}
where $a_p=\max(\boldsymbol{{{X}}}_p)$. By controlling the amplitude range of the network input, we prevent most neuronal inputs from falling into saturation regions, thereby ensuring sufficient gradients during backpropagation for effective weight updates.

\begin{figure*}[]
	\centerline{\includegraphics[width=0.88 \linewidth]{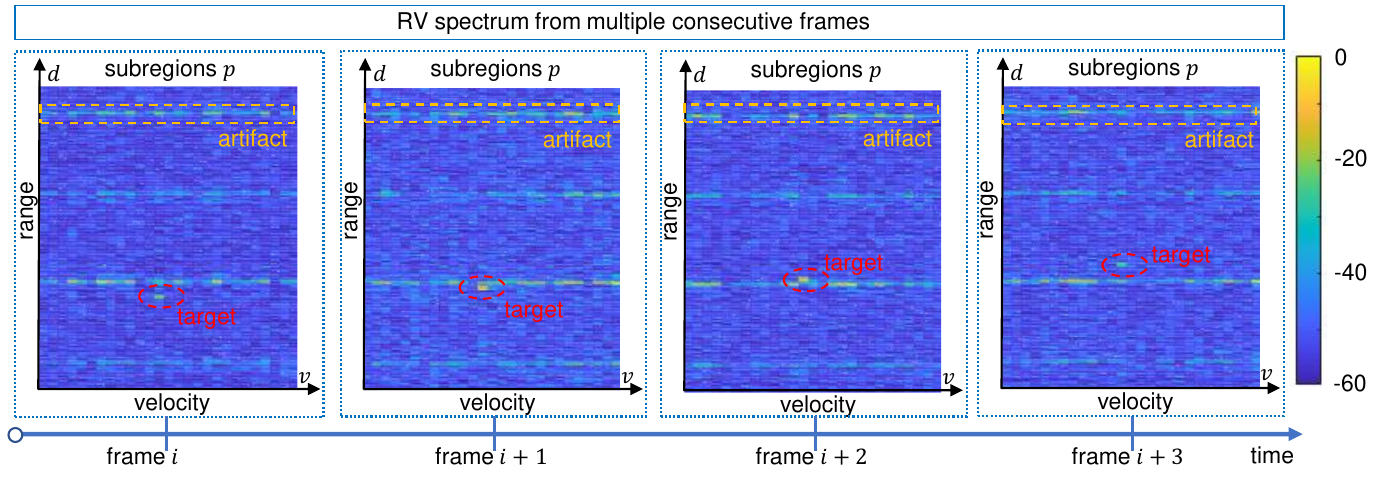}}
	\caption{Raw spectrum of subregion $p$ from multiple consecutive frames. The red ellipse indicates the target, while the orange rectangle marks the artifact.}\label{fig:subregion}
	\vspace{-0.5cm}
\end{figure*}
\vspace{-1.5em}
\subsection{Subregion Classifier}
After subregion division and normalization, two types of samples are formed: subregions with targets and subregions without targets. To optimize the training set, we filter out subregions without targets, as they contribute only noisy components and provide no benefit to subsequent target detection. This problem can be formulated as a standard image classification task, and binary image classification is a well-established problem in the field of computer vision. Therefore, we train a CNN-based classifier \cite{CNNclass} to select subregions with targets. For a sample $\bar{\boldsymbol{{{X}}}}_p\in\mathbb{R}^{1\times K'\times L'}$, the corresponding output of the CNN is expressed as
\begin{align}\label{eq:classify1}
	\eta_p=\mathbf{W}^{\mathsf{T}}f_{\mathrm{CNN}}\left(\bar{\boldsymbol{{{X}}}}_p;\boldsymbol{\theta}_{\mathrm{C}}\right)+b,
\end{align}
where $\mathbf{W}$ is the weight vector of the fully connected layer, $f_{\mathrm{CNN}}(\cdot)$ is the output vector of the $L_{\mathrm{c}}$-layer CNN for feature extraction, $\boldsymbol{\theta}_{\mathrm{C}}$ is the network parameter, and $b$ is the bias term. 

Then, the output of the fully connected layer is converted into probability values using an activation function. Employing the Sigmoid function as the activation function, the probability for binary classification is described as
\begin{subequations}
	\begin{align}\label{eq:classify2}
		&P(\Gamma=1|\bar{\boldsymbol{{{X}}}}_p)=\sigma\left(\eta_p\right)=\frac{1}{1+\mathrm{e}^{-\eta}}, \\ &P(\Gamma=0|\bar{\boldsymbol{{{X}}}}_p)=1-\sigma\left(\eta_p\right)=\frac{1}{1+\mathrm{e}^{\eta}},
	\end{align}
\end{subequations}
where $\sigma(\cdot)$ is the Sigmoid function, and $\Gamma=1$ and $\Gamma=0$ represent subregions with and without targets, respectively. Thus, we can obtain the final classification result as
\begin{align}\label{eq:classify3}
	\hat{\Gamma}=
	\begin{cases}
		1, & P(\Gamma=1|\bar{\boldsymbol{{{X}}}}_p)\geq0.5; \\
		0, & \mathrm{otherwise}.\\
	\end{cases}
\end{align}
During training, binary cross-entropy is employed as the loss function, which is expressed as
\begin{align}\label{eq:classify4}
	\mathcal{L}_{\mathrm{c}}=-\left(\Gamma\log\left(\sigma\left(\eta_p\right)\right)+(1-\Gamma)\log\left(1-\sigma\left(\eta_p\right)\right)\right),
\end{align}
 where $\Gamma\in\{0,1\}$ is the label. The CNN-based classifier acts as a pre-filtering module to remove subregions without targets. This ensures that the downstream training process is concentrated exclusively on target features, thereby improving learning efficiency.

\subsection{Artifacts Eliminator: Physical Priors}
These samples with targets identified through classification are input into the artifact eliminator to enhance target detection performance. To handle the image-like RV spectrum, an encoder-decoder \cite{UNet} can be implemented to effectively mitigate artifacts arising from non-ideal impairments. When targets are not obscured by artifacts, the network can effectively capture their features and adequately suppress artifact interference. However, when weak targets are obscured by artifacts, their features in the image-like spectrum cannot be effectively extracted by the network.

To address this challenge, we propose a multi-frame selective-masked encoder-decoder by incorporating physical priors into the training process. The introduced physical priors pertain to the temporal and spatial characteristics exhibited by the targets and artifacts in the RV spectrum. On the one hand, the position of a target in the RV spectrum is expected to vary continuously between consecutive time frames. On the other hand, these artifacts appear at deterministic locations in the RV spectrum. As dictated by motion dynamics \cite{Khalid2019Convolutional}, target motion follows temporal continuity, meaning its position does not change abruptly and exhibits a strong dependence between consecutive frames. Recalling the artifact periodicity from \textit{Theorem} \ref{th:period2}, this spatial consistency stems from the physical origin of the artifacts and is deterministically tied to the pilot sequences used, thus not varying randomly. To illustrate these characteristics, we present a set of RV spectra for a specific region from multiple consecutive frames in Fig. \ref{fig:subregion}. 

The proposed encoder-decoder leverages temporal continuity and spatial consistency to reconstruct the RV spectrum and thereby enhance target detection performance. First, to prevent weak targets from being obscured by artifacts, we employ a selective mask to guide the network to focus on artifact-affected regions. According to the periodicity of artifacts, the masked range bin is expressed as
\begin{align}\label{eq:SM}
	\mathbf{d}_{\mathrm{M}}=\left\{d\in\mathbf{d} | d = d_0 \pm k\frac{c_0\mu}{2K\Delta_{\mathrm{f}}},\,k=1,\cdots,K_{\mathrm{M}} \right\},
\end{align}
where $\mathbf{d}=[0,...,\frac{c_0}{2\Delta_{\mathrm{f}}}]^{\mathsf{T}}$, $K_{\mathrm{M}}=\lceil\frac{K}{\mu}\rceil$ and $d_0=\arg\max\limits_{d} \left\arrowvert{{X}}[d,v]\right\arrowvert$ is the range bin of the strongest path. 

Then, to capture the temporal relationships between consecutive frames, we employ a multi-frame stacking input and incorporate a module that explicitly models temporal dynamics. By stacking samples from multiple frames, we obtain the multi-frame sample ${\boldsymbol{{\mathcal{X}}}}\in\mathbb{R}^{I\times K'\times L'}$ as
\begin{align}\label{eq:ED1}
	{{{\mathcal{X}}}}[i ,d,v]=\bar{{{{X}}}}^{i}_{\mathrm{(1)}}[d,v]\,,i=1\cdots I,
\end{align}
where $I$ is the number of stacked frames, and $\bar{\boldsymbol{{{X}}}}^{i}_{\mathrm{(1)}}$ is the target-containing subregion in the $i$-th frame of the masked RV spectrum. 

\begin{remark}
When stacking samples from different frames, the subregion $p$ should remain consistent across frames, even if some frames are classified as subregions without targets by the classifier. For example, in consecutive frames $i-1$, $i$, and $i+1$, if subregion $p$ contains a target in both frame $i-1$ and frame $i+1$, all three frames should be stacked together, even though the classifier categorizes subregion $p$ in frame $i$ as a subregion without targets.
\end{remark}

\subsection{Artifacts Eliminator: Training Mechanism}
Consequently, the three-dimensional tensor $\boldsymbol{\mathcal{X}}$ serves as the input to the proposed encoder-decoder. The number of consecutive frames is utilized as the input channel dimension, implicitly capturing temporal correlations. As shown in Fig. \ref{fig:frameworkl}(c), the artifact eliminator is regarded as a reconstruction process of the RV spectrum, which is expressed as
\begin{align}\label{eq:ED0}
	\hat{\boldsymbol{{\mathcal{X}}}}={D}\left({E}\left({\boldsymbol{{{\mathcal{X}}}}};\boldsymbol{\theta}_{\mathrm{E}}\right);\boldsymbol{\theta}_{\mathrm{D}}\right)\,
\end{align}
where ${E}(\cdot)$, ${D}(\cdot)$, $\boldsymbol{\theta}_{\mathrm{E}}$, and $\boldsymbol{\theta}_{\mathrm{D}}$ are the encoder function, decoder function, encoder network parameters, and decoder network parameters, respectively. 
Let the number of layers of the encoder be $L_{\mathrm{e}}$, and the $l_{\mathrm{e}}$-th layer of the encoder is described as
\begin{align}\label{eq:El}
	\boldsymbol{\mathcal{E}}_{l_{\mathrm{e}}}=f_{\mathrm{E}}^{l_{\mathrm{e}}}(\boldsymbol{\mathcal{E}}_{l_{\mathrm{e}}-1};\boldsymbol{\theta}_{l_{\mathrm{e}}})\,
\end{align}
where $f_{\mathrm{E}}^{l_{\mathrm{e}}}(\cdot)$ is the $l_{\mathrm{e}}$-th-layer encoder function, which can be viewed as a downsampling process. $\boldsymbol{\theta}_{l_{\mathrm{e}}}$ is the network parameter of the $l_{\mathrm{e}}$-th encoder layer. For the first encoder layer, its input is $\boldsymbol{\mathcal{E}}_{0}=\boldsymbol{{{\mathcal{X}}}}$. Between the encoder and the decoder is the bottleneck layer, which is expressed as
\begin{align}\label{eq:B}
	\boldsymbol{\mathcal{B}}=f_{\mathrm{B}}(\boldsymbol{\mathcal{E}}_{L_{\mathrm{e}}};\boldsymbol{\theta}_{{\mathrm{B}}})\,
\end{align}
where $f_{\mathrm{B}}(\cdot)$ is the function of the bottleneck layer, and $\boldsymbol{\theta}_{{\mathrm{B}}}$ is the corresponding network parameter. Here, we adopt a structure that first increases and then decreases the dimension to keep the spatial size unchanged and preserve key spatial information. This structure is implemented by two convolutional layers with an activation function inserted in between. Subsequently, we adopt a symmetric structure between the decoder and the encoder, and the number of layers of the decoder is set as $L_{\mathrm{d}}=L_{\mathrm{e}}$ to achieve symmetric feature extraction and restoration. 
To mitigate the loss of spatial information during downsampling, skip connections are incorporated between the encoder and decoder. Thus, the decoder with skip connection is described as
\begin{align}\label{eq:Dl}
	\boldsymbol{\mathcal{D}}_{l_{\mathrm{d}}}=f_{\mathrm{D}}^{l_{\mathrm{d}}}(\boldsymbol{\mathcal{D}}_{l_{\mathrm{d}}-1},\boldsymbol{\mathcal{E}}_{L_{\mathrm{e}}-l_{\mathrm{d}}+1};\boldsymbol{\theta}_{l_{\mathrm{d}}})\,
\end{align}
where $f_{\mathrm{D}}^{l_{\mathrm{d}}}(\cdot)$ is the $l_{\mathrm{d}}$-th-layer decoder function, which can be viewed as an upsampling process. $\boldsymbol{\theta}_{l_{\mathrm{d}}}$ is the network parameter of the $l_{\mathrm{d}}$-th decoder layer. For the first decoder layer, its input is $\boldsymbol{\mathcal{D}}_{0}=\boldsymbol{{{\mathcal{B}}}}$. The skip connection allows the decoder to utilize both compressed feature representations and fine-grained spatial details. We employ a supervised training approach using the mean squared error (MSE) loss function, which is defined as
\begin{align}\label{eq:elimloss}
	\mathcal{L}_{\mathrm{ED}}=\frac{1}{I K' L'}||\hat{\boldsymbol{{\mathcal{X}}}}-{\boldsymbol{{\mathcal{X}}}}^{(0)}||^2_2,
\end{align}
where $\hat{\boldsymbol{{\mathcal{X}}}}=\boldsymbol{\mathcal{D}}_{L_{\mathrm{d}}}$ is the reconstructed RV spectrum and $ {\boldsymbol{{\mathcal{X}}}}^{(0)}$ is the ideal RV spectrum, which can be generated according to the target's range and velocity. 

\begin{remark}
	To ensure the response speed and continuity of detection in real-world deployment, we adopt an overlapping sliding window strategy to build the training and test sets. This involves sampling consecutive frames with a fixed-length window that slides forward by one frame each time.
\end{remark}

\section{Experiment}\label{sec:Experimental}
We conducted a series of data collection campaigns and experiments to validate artifact characteristics and evaluate the proposed framework. First, the mmWave active antenna units (AAUs) were deployed on a rooftop to collect background environmental data. These data were used to observe artifact characteristics and validate the theoretical analysis. Subsequently, data collection with unmanned aerial vehicle (UAV) targets was conducted by deploying the AAUs in a test field. These data were used for performance validation of artifact elimination. {We further introduce simulated data to validate the robustness of the proposed framework under a broader range of system parameters and target trajectories.}

\subsection{Experimental Setup and Datasets}
The experimental details for different scenarios are presented below. The system parameters and dataset settings are listed in Table \ref{table:setting1} and Table \ref{table:setting2}.

\begin{figure}[b]
	\centerline{\includegraphics[width=1.05\linewidth]{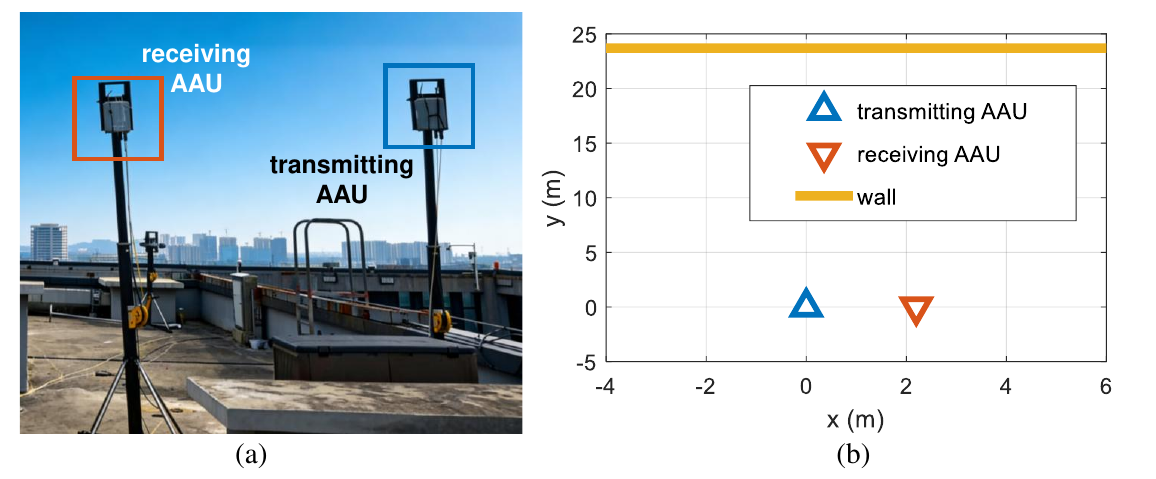}}
	\vspace{-1.1em}
	\caption{\textmd{Experimental setup for {scenario A}. (a) Experimental environment and hardware setup on a rooftop. (b) positions of the transmitting and receiving AAUs.}}\label{fig:AAUposition}
	\vspace{-1.25em}
\end{figure}

\begin{figure}[b]
	\centerline{\includegraphics[width=1.05\linewidth]{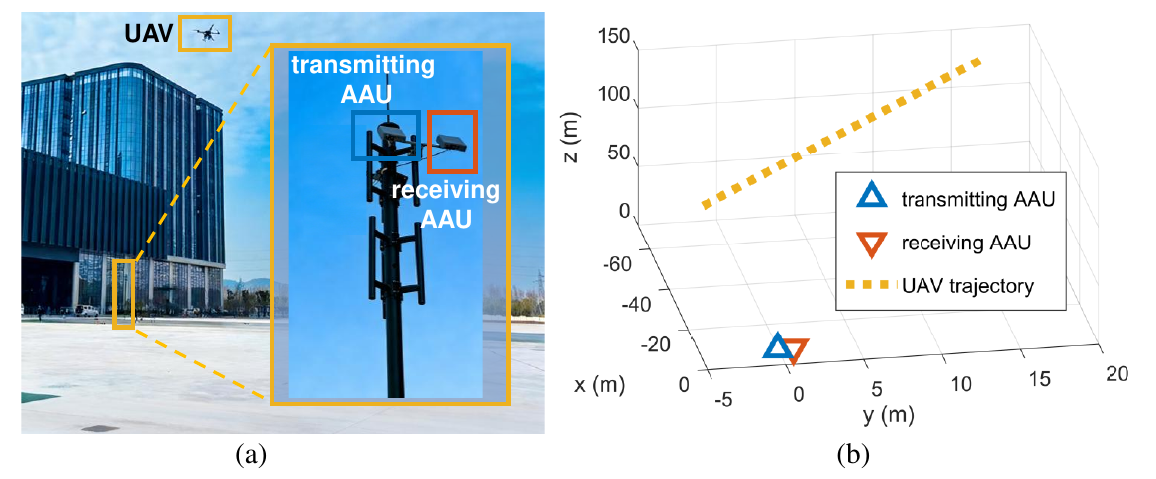}}
	\vspace{-1.1em}
	\caption{\textmd{Experimental setup for {scenario B}. (a) Experimental environment and hardware setup in a test field. (b) positions of the transmitting and receiving AAUs and trajectory of the UAV.}}\label{fig:AAUposition1}
	\vspace{-1.5em}
\end{figure}

\begin{table}[]
	\vspace{-1.75em}
	\centering
	\caption{Parameters of the experimental system.}
	\label{table:setting1}
	\vspace{-0.5em}
	\begin{tabular}{c c c}
		\toprule
		& & \\[-10pt]
		parameter&Scenario A&Scenario B\\
		\hline
		& & \\[-6pt]
		center frequency $f_{\mathrm{c}}$&$25.6\;\mathrm{GHz}$&$25.6\;\mathrm{GHz}$\\
		& & \\[-6pt]
		subcarrier spacing $\Delta_{\mathrm{f}}$&$120\;\text{kHz}$&$120\;\text{kHz}$\\
		& & \\[-6pt]
		number of subcarriers $K$&$1584$&1024\\
		& & \\[-6pt]
		symbol repetition interval $T$&$0.625\;$ms&$0.156\;$ms\\
		& & \\[-6pt]
		number of symbols $L$&$256$&256\\[-1pt]
		\bottomrule
	\end{tabular}
	\vspace{-1.5em}
\end{table}

\begin{table}[]
	\vspace{-0em}
	\centering
	\caption{Dataset settings}
	\label{table:setting2}
	\vspace{-0.5em}
	\begin{tabular}{c c c }
		\toprule
		& & \\[-10pt]
		&Scenario A&Scenario B\\
		\hline
		& & \\[-3.5pt]
		type&\makecell{static environment}&\makecell{with moving UAV}\\
		& & \\[-5pt]
		\makecell{sample size}&$1584\times 256$&$1024\times 256$\\
		& & \\[-5pt]
		\makecell{sample count}&$6$&\makecell{$1216$ (train)\\$304$ (test)}\\
		\bottomrule
	\end{tabular}
	\vspace{-1.5em}
\end{table}

%

\subsubsection{Scenario A}
We collected environmental sensing data using real-world mmWave devices on an outdoor rooftop. As illustrated in Fig. \ref{fig:AAUposition}, the transmitting and receiving AAUs were placed at $(0.0, 0.0, 0.0)$ m and $(2.2, 0.0, 0.0)$ m, respectively. The normal direction of the antenna arrays for both the transmitting and receiving AAUs was aligned parallel to the positive y-axis. A static reflecting wall was positioned 23.7 m away from the AAUs along the positive y-axis. The dataset was collected using three types of ZC sequences, with two samples captured for each sequence configuration, resulting in a total of six samples.

\subsubsection{Scenario B}
We collected sensing data containing UAV target echoes using real-world mmWave devices in a test field.\footnote[4]{The dataset is released for research purposes and is publicly available at: http://pmldatanet.com.cn/dataapp/multimodal.} As shown in Fig. \ref{fig:AAUposition1}, the transmitting and receiving AAUs were installed on pole-mounted brackets centered at $(0.0, 0.0, 13.7)$ m. The UAV target traveled along a linear trajectory from ($-29.0$, $-3.5$, $89.0$) m to ($-72.0$, $17.0$, $123.0$) m at a constant speed of $8$ m/s. The UAV was equipped with a real-time kinematic (RTK) positioning module with centimeter-level accuracy, which provides the ground-truth position and velocity of the UAV. {As shown in Fig.~\ref{fig:RTK}, the ground-truth bistatic range and radial velocity are obtained from RTK measurements using the known Tx/Rx geometry, and are then used to generate the ideal RV spectrum.}

\begin{figure}[h]
	\vspace{-1.1em}
	\centerline{\includegraphics[width=0.85\linewidth]{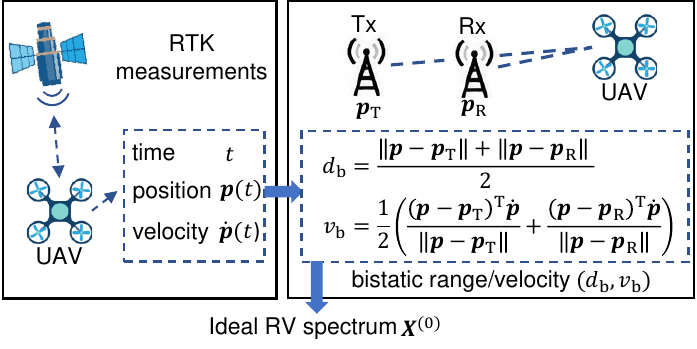}}
	\vspace{-1.1em}
	\caption{\textmd{Generation of the ideal RV spectrum from RTK measurements.}}\label{fig:RTK}
	\vspace{-1.0em}
\end{figure}

\begin{figure*}[]
	\centerline{\includegraphics[width=1\linewidth]{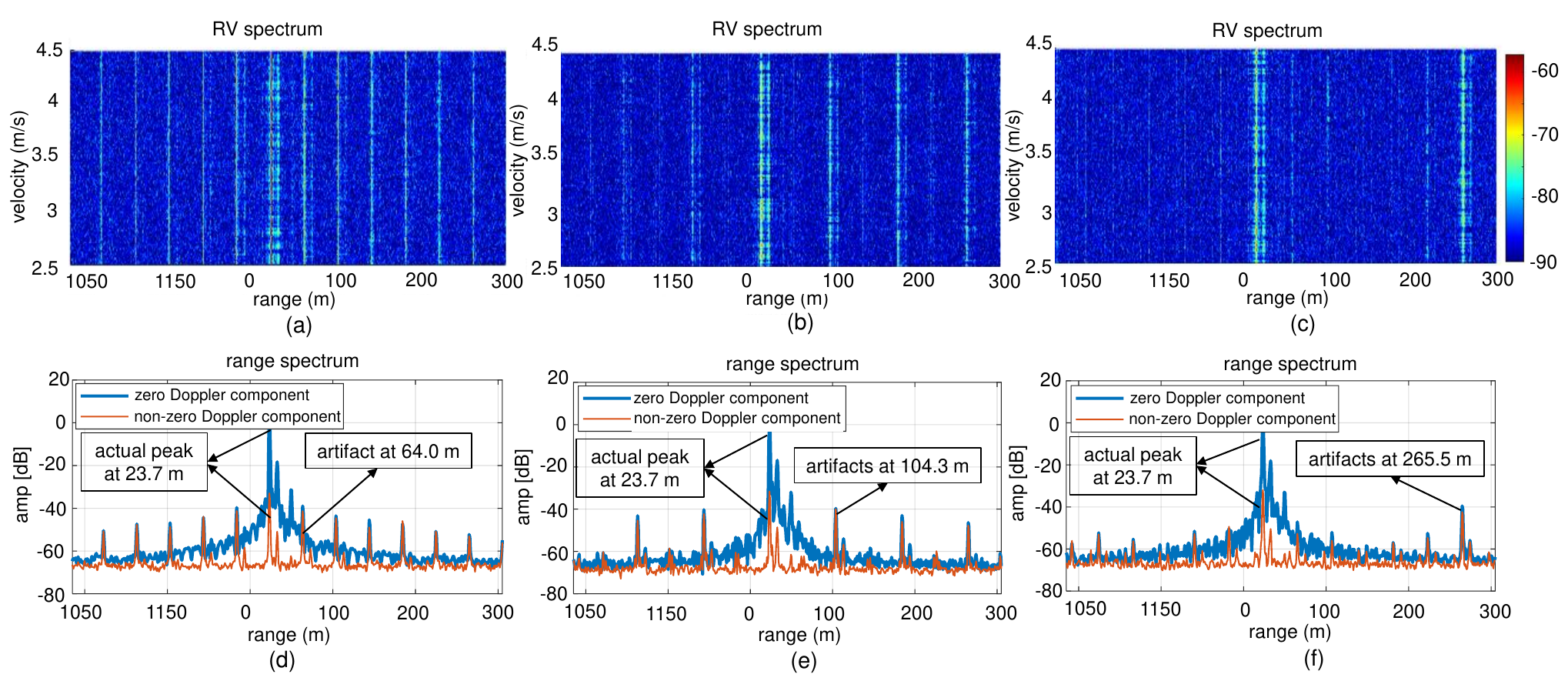}}
	\vspace{-1.5em}
	\caption{\textmd{Experimental results of the normalized RV spectrum and range spectrum with measured data in {scenario A}. $\Delta_{\mathrm{f}}=120\,\mathrm{kHz}$. $K=1584$. (a) RV spectrum. $\mu=51$. (b) RV spectrum. $\mu=102$. (c) RV spectrum. $\mu=306$. (d) range spectrum. $\mu=51$ and $\frac{c_0\mu}{2K\Delta_{\mathrm{f}}}\approx 40.3\,\mathrm{m}$. (e) range spectrum. $\mu=102$ and $\frac{c_0\mu}{2K\Delta_{\mathrm{f}}}\approx 80.5\,\mathrm{m}$. (f) range spectrum. $\mu=306$ and $\frac{c_0\mu}{2K\Delta_{\mathrm{f}}}\approx 241.5\,\mathrm{m}$.}}\label{fig:CFOreal}
	\vspace{-1em}
\end{figure*}

{In scenario B, data were collected over five independent UAV flights, each with two RF channels and eight target-containing beams (out of $30$ scanned), and $23$ frames per flight. A length-$5$ sliding window applied within each flight yields $19$ samples per flight-channel-beam combination, giving a total of $5\times2\times8\times19=1520$ samples. To avoid information leakage from overlapping sliding-window samples, the train/test split is performed at the level of entire flights. The first four flights ($1216$ samples) are used for training and the last flight ($304$ samples) for testing, ensuring no shared raw frames between the two sets. } 

\vspace{-1em}

\subsection{Verification of Artifacts Characteristics}
To verify the theoretical derivations regarding artifact characteristics, the measured data from {scenario A} were first analyzed. In {scenario A}, we employed ZC sequences with different $\mu$ as transmitting pilots and analyzed the received echoes reflected or scattered by objects in the environment. Fig. \ref{fig:CFOreal} shows the RV spectrum and range spectrum of the received echoes. 
The nonzero-Doppler components of the RV spectrum are displayed in Fig. \ref{fig:CFOreal}(a)(b)(c). 
According to \textit{Theorem} \ref{th:period1}, phase noise induces Doppler-domain spectral spreading of the target echo. 
When the energy of the reflected echo is excessively strong, artifacts appear in the nonzero-Doppler components. It can be clearly observed that these artifacts exhibit periodic extensions in the range spectrum, which is consistent with \textit{Theorem} \ref{th:period2}. {The strong static reflector in scenario A is located at $23.7$~m, which is within the CP-supported range of $88.5$~m for $\Delta_\mathrm{f}=120$~kHz. Therefore, the visible artifacts from this reflector are primarily attributable to residual CFO or local-oscillator phase noise rather than insufficient-CP-induced truncation.}

To further validate periodicity, we examined the extension in the range spectrum with different $\mu$. As shown in Fig. \ref{fig:CFOreal}(d)(e)(f), the periodic extension in the zero-Doppler component is dominated by static scatterers, and its periodicity remains discernible. The periodic extension in the nonzero-Doppler components is clearly visible, and its period varies with different $\mu$, which is consistent with \textit{Theorem} \ref{th:period2}. For example, when employing a ZC sequence with $\mu=51$, a false peak extended from the actual peak at $23.7$ m appears at $64.0$ m, which aligns with the calculated period ${c_0\mu}/{2K\Delta_{\mathrm{f}}}\approx 40.3\,\mathrm{m}$. These experimental results validate the characteristics of artifacts in practical OFDM systems with ZC sequences. This provides strong evidence for incorporating physical priors into the network training.

{
The ZC root index also provides a practical waveform-design degree of freedom. The experiments with different root indices suggest two practical guidelines. One is to select $\mu$ so that the artifacts caused by different ICI orders overlap at the same range positions as much as possible. For instance, $\mu=K/2$ and $\mu=K/4$ make the artifacts appear at a limited number of deterministic locations. The other is to select $\mu$ so that the dominant artifact components remain close to the strong reflector. For instance, $\mu=1$ and $\mu=2$ produce small spacings, making distant artifacts associated with higher-order ICI terms and therefore weaker. Nevertheless, the root index must satisfy the applicable SRS configuration, orthogonality, and multiplexing constraints in practical deployments.
}

\subsection{Performance of Artifact Elimination}
We utilized measured data from {scenario B} to evaluate the performance of the proposed architecture in artifact elimination. In {scenario B}, the ZC sequences were used as transmitting pilots, and the echo signal comprises strong scattering paths from the surrounding environment and reflection paths from the UAV. This work focuses on comparing methods for artifact elimination in the RV spectrum, given that the interfering phase errors are time-varying and difficult to calibrate in the time-frequency domain. In addition, our comparison focuses exclusively on supervised learning methods, since accurate labels are available for {scenario B}. {Specifically, we compare the proposed PIAENet against the median filter\footnote[5]{The median filter is adopted to suppress the striped artifacts in the RV spectrum. Specifically, the artifact-prone units identified via calculation based on the periodic characteristics are replaced with the median value of their valid neighboring units.}, the extended cancellation algorithm (ECA), U-Net, and spectral-spatial decomposition for radar signals (Ra-SPD) \cite{Zhao2025RaSPD}. The median filter and ECA are purely physics-driven methods, whereas U-Net and Ra-SPD are deep learning methods. Specifically, U-Net is a purely data-driven baseline, while Ra-SPD integrates a spectral-spatial decomposition mechanism into its deep learning framework.}


\begin{figure*}[]
	\centerline{\includegraphics[width=0.95\linewidth]{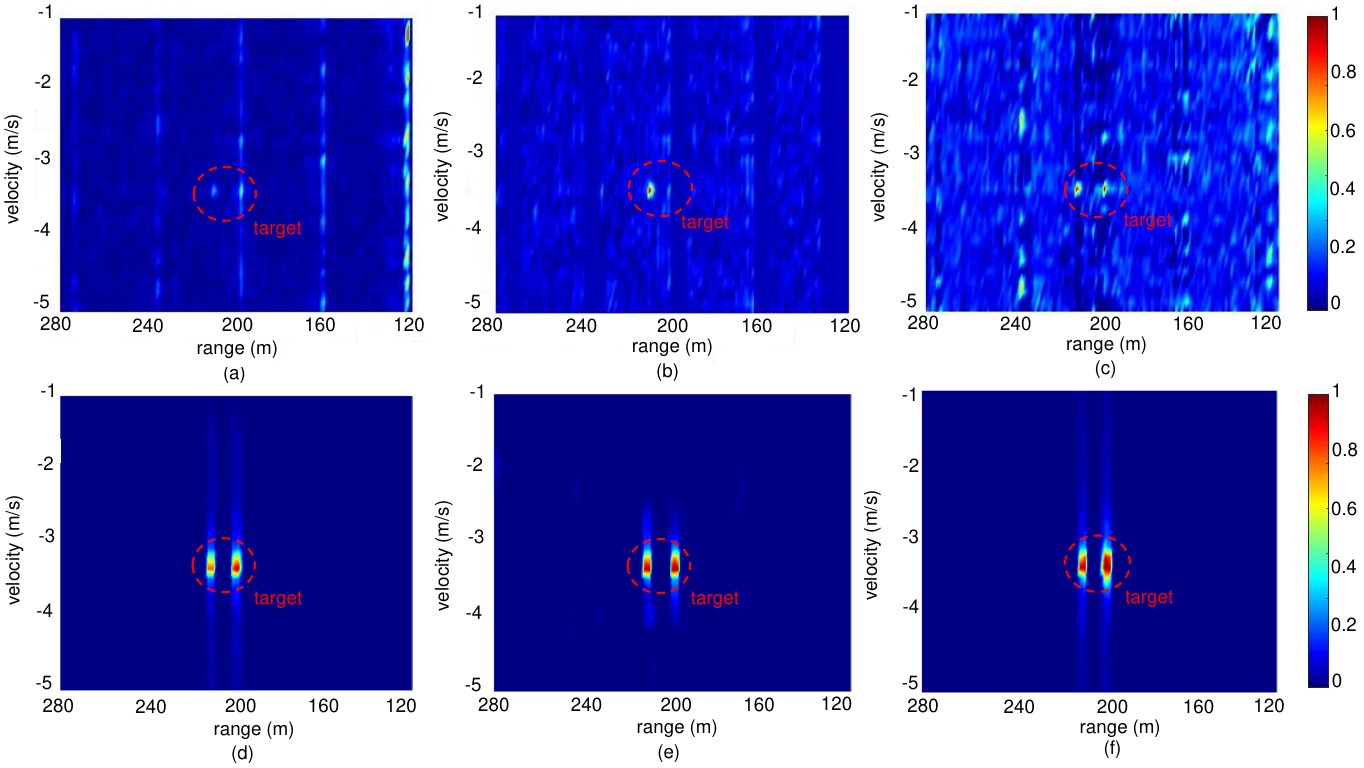}}
	\vspace{-0.3cm}
	\caption{\textmd{Normalized RV spectrum of the measured data before and after artifact elimination in {scenario B}. The target is obscured by the artifacts. (a) raw RV spectrum. (b) median filter. (c) ECA. (d) U-Net. (e) Ra-SPD. (f) the proposed PIAENet. The red ellipse indicates the target.}}\label{fig:simulation_result}
		\vspace{-0.5cm}
\end{figure*}

Fig. \ref{fig:simulation_result} presents the RV spectrum of the measured data, where the target is obscured by artifacts. 
According to the RV spectrum shown in Fig. \ref{fig:simulation_result}(b), when the target is obscured by artifacts, the median filter causes the target to be inadvertently removed as it eliminates those artifacts. {Fig. \ref{fig:simulation_result}(c) presents the results of ECA. As a typical physics-driven cancellation algorithm, ECA achieves moderate suppression of strong artifacts, but prominent residual striped artifacts still persist, leaving the target partially masked with limited enhancement in target detectability. The target peak remains sharp and distinct with a high signal-to-background ratio. Figs. \ref{fig:simulation_result}(d)-(f) show the output of U-Net, Ra-SPD, and the proposed PIAENet. In contrast to physics-driven methods, deep learning methods can preserve the target while simultaneously eliminating artifacts and noise.}

\begin{remark}
	It is worth noting that ISI was introduced due to the insufficient CP length. Therefore, although only a single UAV was used, two target points appeared in the RV spectrum. Since the objective of this paper is artifact mitigation, we do not address this phenomenon and treat the ground truth as corresponding to two targets.
\end{remark}

To further demonstrate the superiority of PIAENet, artifact elimination performance is evaluated by peak signal-to-noise ratio (PSNR), detection probability $P_{\mathrm{d}}$, and false target count with measured data from {scenario B}. The PSNR, which quantifies the similarity between the reconstructed RV spectrum $\hat{\boldsymbol{{\mathcal{X}}}}_i$ and the ground-truth spectrum ${\boldsymbol{{\mathcal{X}}}_i}^{(0)}$ of the $i$-th frame, is defined as
\begin{align}\label{eq:psnr}
	\text{PSNR}_i=10\lg\frac{\left(\max({\hat{\boldsymbol{\mathcal{X}}}_i})\right)^2}{{||\hat{\boldsymbol{{\mathcal{X}}}}_i-{\boldsymbol{{\mathcal{X}}}}_i^{(0)}||^2_2}/ K' L'}.
\end{align}
{In addition, detection probability $P_{\mathrm{d}}$ and false target count are evaluated by a cell-averaging constant false alarm rate (CA-CFAR) detector \cite{CFAR} with a fixed false alarm probability $P_{\mathrm{fa}}=10^{-3}$. The setting $P_{\mathrm{fa}}=10^{-3}$ is adopted as a representative CFAR operating point that balances detection probability and false target count. The guard window contains $3$ range cells and $3$ Doppler cells on each side of the cell under test, and the training window contains $9$ range cells and $18$ Doppler cells. A detected local peak is counted as a correct detection if it lies within a $\pm 2$-bin neighborhood of the RTK-derived target-related ground-truth peak.}

\begin{figure}[]
	\vspace{-0.2cm}
	\centerline{\includegraphics[width=1\linewidth]{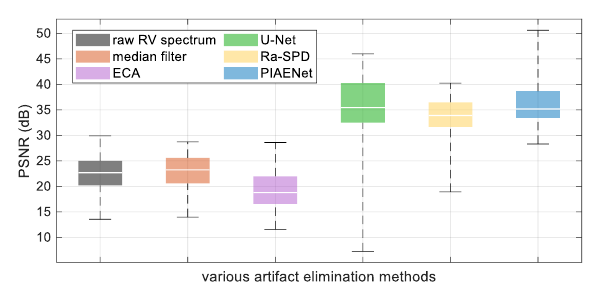}}
	\vspace{-0.5cm}
	\caption{{Boxplot of PSNR for RV spectrum subregions after processing by different artifact elimination methods.}}\label{fig:PSNR}
	\vspace{-0.3cm}
\end{figure}

\begin{figure}[]
	\centerline{\includegraphics[width=1\linewidth]{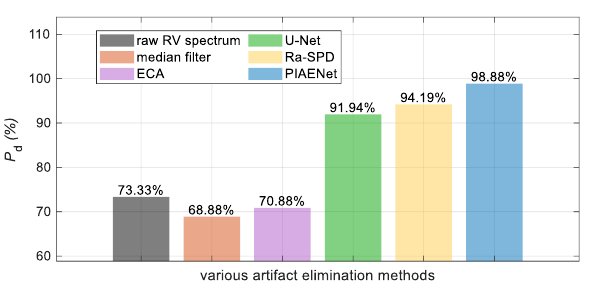}}
	\vspace{-0.3cm}
	\caption{{Detection probability $P_{\mathrm{d}}$ of CFAR ($P_{\mathrm{fa}}=10^{-3}$) for different artifact elimination methods.}}\label{fig:PD}
	\vspace{-0.6cm}
\end{figure}

\begin{figure}[]
	\vspace{-0.5em}
	\centerline{\includegraphics[width=1\linewidth]{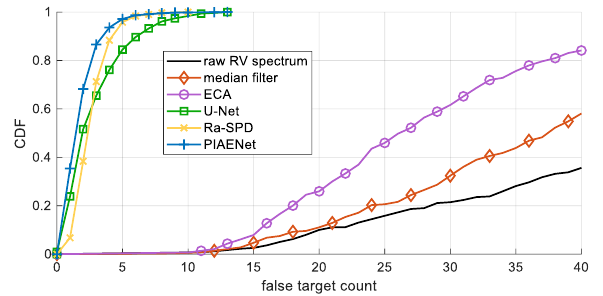}}
	\vspace{-0.3cm}
	\caption{{CDF of false target counts detected by CFAR ($P_{\mathrm{fa}}=10^{-3}$) for different artifact elimination methods.}}\label{fig:PfaN}
	\vspace{-0.5cm}
\end{figure}

Fig.~\ref{fig:PSNR} presents the boxplot of the PSNR for RV spectrum subregions after processing by different artifact elimination methods. {The raw RV spectrum without any artifact elimination processing achieves a mean PSNR of $22.53$~dB. The two traditional physics-driven methods, the median filter and ECA, bring only marginal improvement. The median filter has a median PSNR nearly identical to the raw spectrum, and ECA exhibits a slight decrease in the PSNR. In contrast, all three learning-based methods achieve a substantial leap in PSNR. The purely data-driven U-Net yields a mean PSNR of about $35.85$~dB, and Ra-SPD achieves a mean PSNR of around $34.08$~dB. The proposed PIAENet achieves the highest mean PSNR of $36.62$~dB, and its maximum PSNR exceeds $50.61$~dB.} This result validates that the proposed method can reconstruct the RV spectrum more accurately, outperforming both U-Net and Ra-SPD.

Fig. \ref{fig:PD} compares the target detection probability $P_{\mathrm{d}}$ of different methods. The raw RV spectrum achieves a baseline $P_{\mathrm{d}}$ of $73.33\%$, as the striped artifacts raise the noise floor and partially obscure the weak target, severely reducing target detectability. {Notably, the median filter and ECA degrade the detection performance compared with the raw spectrum.} All learning-based methods achieve a substantial improvement in $P_{\mathrm{d}}$. The proposed PIAENet achieves the highest $P_{\mathrm{d}}$ of $98.88\%$. This result fully demonstrates that the proposed PIAENet can effectively eliminate striped artifacts while preserving the complete target signal, significantly improving target detection performance even in scenarios where the target is heavily obscured by artifacts.

Fig. \ref{fig:PfaN} plots the cumulative distribution function (CDF) of false target counts. For CDF curves of false target counts, a curve shifted further toward the upper-left corner indicates superior false alarm suppression performance, as it means the vast majority of test samples have fewer false alarms.
The raw RV spectrum exhibits the worst false alarm performance, with the most right-shifted CDF curve. Its CDF value reaches merely $0.35$ when the false target count rises to $40$, meaning more than $65\%$ of the test samples have more than $40$ false targets. The median filter achieves only limited improvement over the raw spectrum, and $40\%$ of the test samples still have more than $40$ false targets. {ECA performs slightly better than the median filter, but it still leaves abundant residual artifacts that trigger false alarms.} In contrast, all learning-based methods achieve a dramatic leftward shift of the CDF curve, demonstrating significant superiority in false alarm suppression. Among all the compared methods, the proposed PIAENet achieves the best false alarm suppression performance. Specifically, the CDF of PIAENet reaches $0.95$ when the false target count is less than $5$, meaning $95\%$ of the test samples have false target counts controlled within $5$.

{

\begin{table}[]
	\vspace{-0em}
	\centering
	\caption{{Ablation study on the contribution of physical priors}}
	\label{table:ablation_study}
	\vspace{-0.5em}
	{
		\setlength{\tabcolsep}{5pt}
		\begin{tabular}{lcccc}
			\toprule
			& Config.~1 & Config.~2 & Config.~3 & Config.~4 \\
			\midrule
			selective mask & \ding{55} & \ding{51} & \ding{55} & \ding{51} \\& & \\[-5pt]
			multi-frame & \ding{55} & \ding{55} & \ding{51} & \ding{51} \\& & \\[-5pt]
			PSNR (dB) & 36.38 & 36.57 & 36.35 & 36.62 \\& & \\[-5pt]
			$P_{\mathrm{d}}$ ($\%$) & 86.97 & 90.81 & 97.58 & {98.88} \\& & \\[-5pt]
			$N_{\mathrm{fa}}$ & 7.14 & 3.48 & 3.29 & {2.45}\\
			\bottomrule
		\end{tabular}
	}
	\vspace{-1.5em}
\end{table}	

\begin{table*}[b]
	\centering
	\caption{Detection Performance Comparison of Different Methods Under Different ZC Root Indices}
	\label{tab:zc_root}
	\begin{tabular}{ccccccccccccc}
		\toprule
		\multirow{2}{*}{Methods} 
		& \multicolumn{2}{c}{$\mu=33$} 
		& \multicolumn{2}{c}{$\mu=51$} 
		& \multicolumn{2}{c}{$\mu=73$} 
		& \multicolumn{2}{c}{$\mu=97$} 
		& \multicolumn{2}{c}{$\mu=102$}
		& \multicolumn{2}{c} {overall}\\
		\cmidrule(lr){2-3} \cmidrule(lr){4-5} \cmidrule(lr){6-7} \cmidrule(lr){8-9} \cmidrule(lr){10-11} \cmidrule(lr){12-13}
		& $P_{\mathrm{d}}$ ($\%$) & $N_{\mathrm{fa}}$ 
		& $P_{\mathrm{d}}$ ($\%$) & $N_{\mathrm{fa}}$
		& $P_{\mathrm{d}}$ ($\%$) & $N_{\mathrm{fa}}$
		&$P_{\mathrm{d}}$ ($\%$) & $N_{\mathrm{fa}}$ 
		&$P_{\mathrm{d}}$ ($\%$) & $N_{\mathrm{fa}}$ 
		& $P_{\mathrm{d}}$ ($\%$) & $N_{\mathrm{fa}}$  \\
		\midrule
		raw RV spectrum & $52.85$& $66.32$ & $36.39$ & $63.98$ & $39.87 $ & $59.36$ & $20.57 $ & $71.16$ & $72.47 $ & $38.09$ & $44.43$ & $59.78$\\
		median filter   & $64.87$& $10.93$ & $65.19$& $10.46$ & $66.77$ & $15.78$ & $39.87 $ & $38.60$ & $72.47 $ & $14.34$ & $61.84$& $18.02$ \\
		ECA & $47.83$ & $20.09$ & $45.66$ & $17.20$ & $43.29$ & $12.92$ & $37.90$ & $16.02$ & $50.79$ & $10.91$ & $45.09$ & $15.43$\\
		U-Net & $95.25$ & $1.45$ & $95.25$ & $1.84$ & $93.99$ & $2.70$ & $92.41$ & $2.36$ & $98.10$ & $1.36$ & $95.00$ & $1.94$ \\
		Ra-SPD & $87.98$ & $3.77$ & $93.35$ & $0.89$ & $94.30$ & $0.92$ & $93.35$ & $0.78$ & $96.52$ & $0.67$ & $93.10$ & $1.41$\\
		PIAENet         & $99.21$ & $0.64$ & $98.95$ & $0.79$ & $99.54$ & $0.42$ & $99.54$ & $0.38$ & $99.47$ & $0.38$ &{$\mathbf{99.34}$} &\textbf{$\mathbf{0.52}$}\\
		\bottomrule
	\end{tabular}
\end{table*}

\vspace{-0.5em}
\subsection{Ablation Study}

We construct four configurations of the artifact eliminator by independently toggling the two proposed mechanisms, as shown in Table~\ref{table:ablation_study}. We compare the median PSNR, mean false target count $N_{\mathrm{fa}}$, and detection probability $P_{\mathrm{d}}$ of CFAR ($P_{\mathrm{fa}}=10^{-3}$). The selective mask alone increases the mean PSNR from $36.38$ to $36.57$~dB and improves $P_{\mathrm{d}}$ from $86.97\%$ to $90.81\%$, since it guides reconstruction toward artifact-prone bins. Multi-frame stacking alone also achieves a median PSNR of $36.35$~dB and raises $P_{\mathrm{d}}$ to $97.58\%$, confirming the value of temporal correlation. By combining both mechanisms, PIAENet achieves the highest $P_{\mathrm{d}}$ of $98.88\%$ and the lowest $N_{\mathrm{fa}}$ of $2.45$, while retaining a mean PSNR of $36.62$~dB. These results show that the selective mask identifies where reconstruction is needed, whereas the multi-frame mechanism supplies the temporal information required to suppress residual artifacts without removing weak targets.

\subsection{Robustness Validation via Simulation}

To supplement the real-world measurements, we perform comprehensive simulations to further evaluate the robustness of PIAENet under a wider range of operating conditions. In our simulation setup, the transmitting and receiving AAUs are positioned at $(0.0, 0.0, 0.0)$~m and $(7.2, 0.0, 0.0)$~m, respectively, while a static strong scatterer is placed at $(0.0, 20.0, 0.0)$~m. We introduce a CFO of $0.04$~ppm ($f_{\mathrm{o}}\approx 1$ kHz) to induce artifacts. As the simulated data are generated based on the theoretically derived model, they inherently preserve the same range extension and Doppler spreading characteristics as the measured artifacts, thereby ensuring consistency between the simulation and experimental results.

To evaluate the artifact elimination performance under varied conditions, we conduct simulations in which a target moves from 
$(0.0,560.0,0.0)$ m to $(0.0,610.0,0.0)$ m with a constant velocity of $(0.0,8.0,0.0)$ m/s. The performance of the competing methods is examined under five distinct ZC root indices, $\mu \in \{33, 51, 73, 97, 102\}$, each inducing a distinct artifact periodicity in the range domain according to \textit{Theorem}~\ref{th:period2}. The performance is assessed by detection probability $P_{\mathrm{d}}$ and mean false target count $N_{\mathrm{fa}}$ of CFAR ($P_{\mathrm{fa}}=10^{-3}$). As summarized in Table~\ref{tab:zc_root}, the results show that the proposed PIAENet maintains consistently high detection performance and low false alarm counts across all evaluated roots. This confirms that the proposed PIAENet generalizes well and remains effective across different ZC root indices.

We further evaluate PIAENet under five trajectory configurations to jointly probe its robustness to target range, velocity, and motion dynamics. In all cases, the ZC root index is fixed as $\mu=51$. Traj.~1 follows the original setting, where the target moves from the initial range to the terminal range with a fixed velocity of $8$~m/s. Traj.~2 reverses both the range evolution and the velocity direction, starting from the terminal range and moving back to the initial range with a fixed velocity of $-8$~m/s. Traj.~3 starts from the same initial range as Traj.~1, while its velocity continuously increases from $6.5$ to $9.5$~m/s, and the range evolution is calculated according to this time-varying velocity. Traj.~4 starts from the terminal range of the original trajectory and uses a negative velocity varying from $-6.5$ to $-9.5$~m/s, leading to a reversed range evolution. Traj.~5 starts from the same initial range as Traj.~1, with the velocity first increasing from $7$ to $9$~m/s and then decreasing from $9$ to $7$~m/s, representing non-uniform motion with a velocity reversal trend.
The trajectory-robustness table is prepared to report $P_{\mathrm{d}}$ and the mean false target count $N_{\mathrm{fa}}$ of CFAR ($P_{\mathrm{fa}}=10^{-3}$) under the five trajectory configurations. The numerical entries are temporarily left blank and will be filled after the corresponding experiments are finalized.

\begin{table*}[]
	\centering
	\caption{Detection Performance Comparison of Different Methods Under Different Target Trajectories}
	\begin{tabular}{ccccccccccccc}
		\toprule
		\multirow{2}{*}{Methods} 
		& \multicolumn{2}{c}{Traj.~1} 
		& \multicolumn{2}{c}{Traj.~2} 
		& \multicolumn{2}{c}{Traj.~3} 
		& \multicolumn{2}{c}{Traj.~4} 
		& \multicolumn{2}{c}{Traj.~5}
		& \multicolumn{2}{c} {overall}\\
		\cmidrule(lr){2-3} \cmidrule(lr){4-5} \cmidrule(lr){6-7} \cmidrule(lr){8-9} \cmidrule(lr){10-11} \cmidrule(lr){12-13}
		& $P_{\mathrm{d}}$ ($\%$) & $N_{\mathrm{fa}}$ 
		& $P_{\mathrm{d}}$ ($\%$) & $N_{\mathrm{fa}}$
		& $P_{\mathrm{d}}$ ($\%$) & $N_{\mathrm{fa}}$
		& $P_{\mathrm{d}}$ ($\%$) & $N_{\mathrm{fa}}$ 
		& $P_{\mathrm{d}}$ ($\%$) & $N_{\mathrm{fa}}$ 
		& $P_{\mathrm{d}}$ ($\%$) & $N_{\mathrm{fa}}$  \\
		\midrule
		raw RV spectrum & $36.08$ & $63.90$ & $36.39$ & $64.73$ & $59.49$ & $63.52$ & $62.34$ & $64.19$ & $62.34$ & $64.18$ & $51.33$ & $64.11$\\
		median filter & $64.24$ & $10.67$ & $60.76$ & $10.98$ & $73.73$ & $10.63$ & $74.68$ & $10.50$ & $77.85$ & $10.83$ & $70.25$ & $10.72$\\
		ECA & $47.17$ & $17.21$ & $42.96$ & $16.82$ & $61.78$ & $17.63$ & $58.62$ & $17.64$ & $59.47$ & $17.55$ & $54.00$ & $17.37$\\
		U-Net & $85.44$ & $0.21$ & $84.18$ & $0.16$ & $90.51$ & $0.21$ & $91.14$ & $0.18$ & $86.39$ & $0.22$ & $87.53$ & $0.19$\\
		Ra-SPD & $94.94$ & $1.84$ & $91.46$ & $2.46$ & $93.67$ & $2.23$ & $95.89$ & $2.10$ & $93.67$ & $1.96$ & $93.92$ & $2.12$\\
		PIAENet & $99.21$ & $1.51$ & $99.54$ & $1.44$ & $95.13$ & $1.84$ & $97.83$ & $1.85$ & $93.36$ & $1.86$ & {$\mathbf{97.01}$} & {$\mathbf{1.70}$}\\
		\bottomrule
	\end{tabular}
\end{table*}



Taken together, these simulation results, grounded in the theoretically validated artifact model, provide broader evidence that the robustness of PIAENet extends beyond the specific ZC root and target trajectory conditions.

}

\section{Conclusion}\label{sec:conclusion}
This paper focuses on the observation, modeling, and elimination of polyphase-code artifacts in 5G NR-based sensing. 
Initially, we identify the presence of striped artifacts in the RV spectrum when polyphase-code sequences are used as reference signals. These artifacts severely impair weak target detection and increase false alarms. 
On this basis, we establish a theoretical model that explains the physical origin of the artifacts. The analysis reveals that these artifacts exhibit periodic extensions in the range domain and spectral spreading in the velocity domain. 
Subsequently, based on the characteristics of the artifacts, we propose a data-physics-driven framework, PIAENet, centered on a multi-frame selective-masked encoder-decoder. This framework incorporates temporal continuity and spatial consistency priors into network training, enabling effective artifact elimination while maintaining high sensitivity to weak targets. 
Ultimately, we comprehensively validate the proposed method through real-world measurements. Experimental results show that, compared with traditional physics-driven methods and purely data-driven approaches, PIAENet significantly improves target detection performance and reduces false alarms, even in harsh scenarios where weak targets are obscured by artifacts.

Future research will be extended in three promising directions. First, we will extend the application of PIAENet to more diverse scenarios and evaluate its potential for practical deployment. Second, we will optimize polyphase-code sequences and design alternative orthogonal sequences with favorable correlation properties (e.g., pseudorandom codes) to replace ZC sequences. Finally, we will optimize the transmission waveform by developing advanced ISAC waveforms with robust anti-interference performance to replace OFDM.

\vspace{-0.5em}
\appendices
{\section{Velocity/Doppler Domain Spreading Derivation}\label{AppendixA}
Assume that for $n=0,...,N-1$, the equality $g_l[n]=g^{}_l$ holds. For all $n$, the time-domain signal of the $l$-th symbol with multiplicative error is 
\begin{align}\label{appex:appexlA0k}
	\tilde{y}_l={g}^{}_l{y}_l,\quad l=0,...,L-1,
\end{align}
where $\mathbf{g}^{}=[{g}^{}_0,...,{g}^{}_{L-1}]^{\mathsf{T}}$ and $\mathbf{y}=[y_0,...,y_{L-1}]^{\mathsf{T}}$ are the multiplicative noise and the ideal pilot signal, respectively. Here, we proceed by ignoring the scattering coefficient ($\beta = 1$) and delay ($\tau=0$) for simplicity. After the time-frequency transformation and pilot unloading as (\ref{eq:xlq1})-(\ref{eq:sensingsignalH1}), the CSI obtained from the pilot signal of all $k$ for a single path is derived as
\begin{align}\label{appex:appexlA1k}
	{\tilde{h}}_l={g}^{}_l{{h}}_l,
\end{align}
where ${{h}}_l=\mathrm{e}^{\jmath 2\mathrm{\uppi}f_{\mathrm{d},0}lT}$, $f_{\mathrm{d},0}$, and $T$ are the Doppler and total symbol duration. We further focus on the non-ideal impact on Doppler estimation. By Fourier transform, we have the ideal and non-ideal Doppler spectra as
\begin{subequations}\label{appex:appexlA2k}
\begin{align}
	{H}[\upsilon]=\mathcal{F}\{{h}_l\}=\sum_{l=0}^{L-1} \mathrm{e}^{\mathrm{j} 2\mathrm{\uppi}(f_{\mathrm{d},0}-f_{\mathrm{d}}[\upsilon])lT},
\end{align}
\begin{align}
	&\tilde{{H}}[\upsilon]=
	\mathcal{F}\{{\tilde{h}}_l\}=\frac{1}{L}{G}^{}[\upsilon]\ast{H}[\upsilon]
	\\&=\frac{1}{L}\sum_{\iota=-L+1}^{L-1}{G}^{}[\iota]{H}[\upsilon-\iota],\nonumber
\end{align}
\end{subequations}
where $f_{\mathrm{d}}[\upsilon]$ is the Doppler and $\upsilon$ is the corresponding index in the Doppler domain. ${G}^{}[\upsilon]$ is the DFT of ${g}^{}_l$. This implies that the ideal spectrum concentrated at $f_{\mathrm{d},0}$ undergoes convolution with $G^{}[\upsilon]$, resulting in spreading across other Doppler bins.

\section{Range/Delay domain Extension Derivation}\label{AppendixB}
For all $l$, the time-domain signal of the $n$-th sampling point with multiplicative error is 
\begin{align}\label{appex:appexlB0k}
	\tilde{y}[n]={g}^{}[n]{y}[n],\quad n=0,...,N-1,
\end{align}
where $\mathbf{g}^{}=[{g}^{}[0],...,{g}^{}{[N-1]}]^{\mathsf{T}}$ and $\mathbf{y}=[{y}^{}[0],...,{y}^{}{[N-1]}]^{\mathsf{T}}$ are the multiplicative noise and the ideal pilot signal, respectively. Here, we proceed by ignoring the scattering coefficient ($\beta = 1$) and Doppler ($f_{\mathrm{d}}=0$) for simplicity.
Thus, the ideal and non-ideal pilot signals for a single path in the frequency domain are 
\begin{subequations}\label{appexeq:xl1k}
	\begin{align}
		{Y}_k=\mathcal{F}\{{y}[n]\}=S_k\mathrm{e}^{-\mathrm{j} 2\mathrm{\uppi}\tau_{0}k\Delta_{\mathrm{f}}},\quad k=0,..,K-1,
	\end{align}
	\begin{align}
		\tilde{Y}_k=\mathcal{F}\{{g}^{}[n]{y}[n]\}=\frac{1}{N}{G}^{}_k\ast{Y}_k
		=\frac{1}{N}\sum_{m=-K+1}^{K-1}{G}^{}_m{Y}_{k-m},
	\end{align}
\end{subequations}
where $\tau_0$ and $\Delta_\mathrm{f}$ are the delay and subcarrier spacing, $S_k=a_k\mathrm{e}^{\mathrm{j} 2\mathrm{\uppi}\phi_k}$ is the pilot sequence, ${G}^{}_k$ is the DFT of ${g}^{}[n]$. Here $N=K$. In (\ref{appexeq:xl1k}b), all terms with $m\neq0$ are treated as ICI. After pilot unloading as (\ref{eq:sensingsignalH1}), the CSI obtained from non-ideal pilot signal in (\ref{appexeq:xl1k}b) is derived as
\begin{align}\label{appeq:hq0}
	&\tilde{H}_k=\tilde{Y}_k/S_k
	=\frac{1}{N}\left(\sum_{m=-K+1}^{K-1}{G}^{}_k{Y}_{k-m}/S_k\right).
\end{align}
We further focus on the non-ideal impact on delay estimation. By inverse Fourier transform for $\tilde{H}_k$ in (\ref{appeq:hq0}), we have the non-ideal delay spectrum as
\begin{align}\label{appeq:hq0d}
	\tilde{R}[d]=\frac{1}{N}\left(\sum_{m=0}^{K-1}\left({G}^{}_m\sum_{k=m}^{K-1}\frac{S_{k-m}}{S_{k}}\mathrm{e}^{\mathrm{j} 2\mathrm{\uppi}(\tau[d]-\tau_{0})(k-m)\Delta_{\mathrm{f}}}\right)\right.\\\left.+\sum_{m=-K+1}^{-1}\left({G}^{}_m\sum_{k=0}^{K+m-1}\frac{S_{k-m}}{S_{k}}\mathrm{e}^{\mathrm{j} 2\mathrm{\uppi}(\tau[d]-\tau_{0})(k-m)\Delta_{\mathrm{f}}}\right)\right).\nonumber
\end{align}
where $\tau[d]$ is the delay and $d$ is the corresponding index of delay domain.
In (\ref{appeq:hq0d}), the sum of $m$-th and $(-m)$-th term ($m\neq0$) is induced by the non-ideal impairments, which is written as
\begin{align}\label{appeq:Xi}
	&\xi_{m}[d]=\frac{1}{N}\left({G}^{}_m\sum_{k=m}^{K-1}\frac{S_{k-m}}{S_{k}}\mathrm{e}^{\mathrm{j} 2\mathrm{\uppi}(\tau[d]-\tau_{0})(k-m)\Delta_{\mathrm{f}}}+\right.\\&\left.{G}^{}_{-m}\sum_{k=0}^{K-m-1}\frac{S_{k+m}}{S_{k}}\mathrm{e}^{\mathrm{j} 2\mathrm{\uppi}(\tau[d]-\tau_{0})(k+m)\Delta_{\mathrm{f}}}\right), m=1,...,K-1.\nonumber
\end{align}
{
Consider a single-index quadratic-phase polyphase pilot
\begin{equation}
S_k=\mathrm{e}^{\mathrm{j}\left(ak^2+bk+c\right)},
\, k=0,\ldots,K-1,
\label{appeq:quadratic_general}
\end{equation}
where $a$, $b$, and $c$ are real constants. This class includes ZC, P3, and P4 codes. For an ICI order $m>0$, its
two pilot ratios are
\begin{align}
\frac{S_{k-m}}{S_k}
=\mathrm{e}^{\mathrm{j}\left(am^2-bm\right)}
\mathrm{e}^{-\mathrm{j}2amk},
\frac{S_{k+m}}{S_k}
=\mathrm{e}^{\mathrm{j}\left(am^2+bm\right)}
\mathrm{e}^{\mathrm{j}2amk}.
\label{appeq:quadratic_ratio}
\end{align}
Thus, both ratios are single linear-phase terms in $k$. Define the signed
delay shift
\begin{equation}
\delta_{\tau}^{\mathrm Q}[m]=\frac{am}{\pi\Delta_{\mathrm f}}.
\label{appeq:quadratic_delay_shift}
\end{equation}
Let $
C_m^{\mathrm Q,\mp}[d]=
\mathrm{e}^{\mathrm{j}(am^2\mp bm)\mp
\mathrm{j}2\pi(\tau[d]-\tau_0)m\Delta_{\mathrm f}}$,
which is independent of $k$. Substituting \eqref{appeq:quadratic_ratio} into
\eqref{appeq:Xi} gives
\begin{align}
\xi_m^{\mathrm Q}[d]
={}&\frac{1}{N}\Bigg(
G_m C_m^{\mathrm Q,-}[d]
\sum_{k=m}^{K-1}
\mathrm{e}^{\mathrm{j}2\pi\left(\tau[d]-\tau_0-\delta_{\tau}^{\mathrm Q}[m]\right)k\Delta_{\mathrm f}}
\nonumber\\
&+G_{-m}C_m^{\mathrm Q,+}[d]
\sum_{k=0}^{K-m-1}
\mathrm{e}^{\mathrm{j}2\pi\left(\tau[d]-\tau_0+\delta_{\tau}^{\mathrm Q}[m]\right)k\Delta_{\mathrm f}}
\Bigg).
\label{appeq:quadratic_xi}
\end{align}
The factors $C_m^{\mathrm Q,\mp}[d]$ affect only the complex
amplitudes. Hence, the $m$-th ICI contribution produces a pair of
sharp delay-domain kernels centered at $\tau_0\pm\delta_{\tau}^{\mathrm Q}[m]$.
The corresponding magnitude of the range offset is
\begin{equation}
\left|\Delta d_m^{\mathrm Q}\right|
=\frac{c_0|a|m}{2\pi\Delta_{\mathrm f}}.
\label{appeq:quadratic_range_shift}
\end{equation}
}
As a ZC example, substituting $S_k^{\mathrm{ZC}}=\mathrm{e}^{-\mathrm{j}\pi\mu k(k+1)/K}$
into \eqref{appeq:quadratic_general} gives
\begin{equation}
a=-\frac{\pi\mu}{K},\qquad b=-\frac{\pi\mu}{K},\qquad c=0.
\end{equation}
Hence, the two-sided ZC artifact locations are
\begin{equation}
\tau[d]=\tau_0\pm\delta_{\tau}^{\mathrm{ZC}}[m],
\qquad
\delta_{\tau}^{\mathrm{ZC}}[m]=\frac{\mu m}{K\Delta_{\mathrm f}},
\label{appeq:deltad}
\end{equation}
where the sign convention only exchanges the two ICI contributions. The
corresponding range offsets and the spacing between adjacent ICI orders are
\begin{equation}
\left|\Delta d_m^{\mathrm{ZC}}\right|
=\frac{c_0\mu m}{2K\Delta_{\mathrm f}},
\qquad
\Delta R_{\mathrm{ZC}}=\frac{c_0\mu}{2K\Delta_{\mathrm f}},
\label{appeq:zc_period}
\end{equation}
where $c_0$ is the speed of light.

{
\section{Cyclic Matrix-Polyphase-Code Range Extension}\label{AppendixC}

Frank, P1, and P2 are matrix-polyphase codes with an $M\times M$ phase
table and code length $N_{\mathrm C}=M^2$. Let $\kappa_k\equiv k\pmod
{N_{\mathrm C}}$ and define
\begin{equation}\label{appeq:matrix_general}
I_k=\left\lfloor\frac{\kappa_k}{M}\right\rfloor,\,
J_k=\kappa_k-MI_k,\,
S_k^{\mathrm C}=\mathrm{e}^{\mathrm{j}\Phi(I_k,J_k)}.
\end{equation}
Here, $\Phi(i,j)$ is the phase table of the selected code. In particular,
\begin{equation}\label{appeq:matrix_examples}
\Phi_{\mathrm F}(i,j)=\frac{2\pi ij}{M},\,
\Phi_{\mathrm{P1}}(i,j)=-\frac{\pi}{M}(M-2i-1)(iM+j),
\end{equation}
where the latter is the standard P1 definition for even $M$. A cyclic code-to-subcarrier mapping is defined 
for an arbitrary number $K$ of occupied subcarriers.
For a fixed ICI order $m$, the cyclic mapping in
\eqref{appeq:matrix_general} makes the pilot-ratio sequences
$N_{\mathrm C}$-periodic. Re-index the negative-ICI and positive-ICI sums in
\eqref{appeq:Xi} by $u=k-m$ and $v=k+m$, respectively, and define the
$N_{\mathrm C}$-point Fourier-series coefficients
\begin{align}
A_{\ell,m}^{\mathrm C,-}
&=\frac{1}{N_{\mathrm C}}\sum_{u=0}^{N_{\mathrm C}-1}
\frac{S_u^{\mathrm C}}{S_{u+m}^{\mathrm C}}
\mathrm{e}^{\mathrm{j}2\pi\ell u/N_{\mathrm C}},\\
A_{\ell,m}^{\mathrm C,+}
&=\frac{1}{N_{\mathrm C}}\sum_{v=0}^{N_{\mathrm C}-1}
\frac{S_v^{\mathrm C}}{S_{v-m}^{\mathrm C}}
\mathrm{e}^{-\mathrm{j}2\pi\ell v/N_{\mathrm C}}.
\end{align}
Here, $\ell=0,\ldots,N_{\mathrm C}-1$, and every pilot index is evaluated
using the cyclic mapping in \eqref{appeq:matrix_general}. Hence,
\begin{align}
\frac{S_u^{\mathrm C}}{S_{u+m}^{\mathrm C}}
&=\sum_{\ell=0}^{N_{\mathrm C}-1}
A_{\ell,m}^{\mathrm C,-}\mathrm{e}^{-\mathrm{j}2\pi\ell u/N_{\mathrm C}},\\
\frac{S_v^{\mathrm C}}{S_{v-m}^{\mathrm C}}
&=\sum_{\ell=0}^{N_{\mathrm C}-1}
A_{\ell,m}^{\mathrm C,+}\mathrm{e}^{\mathrm{j}2\pi\ell v/N_{\mathrm C}}.
\end{align}
Substituting these expansions into \eqref{appeq:Xi} gives
\begin{align}
\xi_m^{\mathrm C}[d]
={}&\frac{1}{N}\Bigg(
G_m\sum_{\ell=0}^{N_{\mathrm C}-1}A_{\ell,m}^{\mathrm C,-}
\sum_{u=0}^{K-m-1}
\mathrm{e}^{\mathrm{j}2\pi\left(\tau[d]-\tau_0-
\frac{\ell}{N_{\mathrm C}\Delta_{\mathrm f}}\right)u\Delta_{\mathrm f}}
\nonumber\\
&+G_{-m}\sum_{\ell=0}^{N_{\mathrm C}-1}A_{\ell,m}^{\mathrm C,+}
\sum_{v=m}^{K-1}
\mathrm{e}^{\mathrm{j}2\pi\left(\tau[d]-\tau_0+
\frac{\ell}{N_{\mathrm C}\Delta_{\mathrm f}}\right)v\Delta_{\mathrm f}}
\Bigg),
\label{appeq:matrix_local_xi}
\end{align}
which has the same shifted-kernel form as
\eqref{appeq:quadratic_xi}. The candidate delay shifts are
\begin{equation}
\delta_{\tau}^{\mathrm C}[\ell]
=\frac{\ell}{N_{\mathrm C}\Delta_{\mathrm f}},
\qquad \ell=0,\ldots,N_{\mathrm C}-1.
\label{appeq:matrix_delay_grid}
\end{equation}
The coefficients $A_{\ell,m}^{\mathrm C,-}$ and
$A_{\ell,m}^{\mathrm C,+}$ determine which candidate shifts are visible and
their amplitudes. Thus, Frank, P1, and P2 codes share the same candidate
delay grid under a common cyclic $M\times M$ mapping, but have different
artifact amplitudes because their phase tables are different. When
$K<N_{\mathrm C}$, the finite observation window has a delay resolution of
approximately $1/(K\Delta_{\mathrm f})$, which is coarser than the candidate
grid spacing $1/(N_{\mathrm C}\Delta_{\mathrm f})$.
}

}


 

\begin{thebibliography}{1}
\bibliographystyle{IEEEtran}

	\bibitem{You2023Toward}
	X. You, Y. Huang, S. Liu, D. Wang, J. Ma, C. Zhang, H. Zhan, C. Zhang, J. Zhang, Z. Liu \textit{et al.}, ``Toward 6G TK${\upmu}$ extreme connectivity: Architecture, key technologies and experiments,'' \textit{IEEE Wireless Commun.}, vol. 30, no. 3, pp. 86–95, Jun. 2023.
	
	\bibitem{ITU2022Future}
	ITU-R, ``Future technology trends of terrestrial International Mobile Telecommunications systems towards 2030 and beyond,'' ITU-R Report M.2516-0, 2022.
	
	\bibitem{Liu2022Integrated}
	F. Liu, Y. Cui, C. Masouros, J. Xu, T. X. Han, Y. C. Eldar, and S. Buzzi, ``Integrated sensing and communications: Toward dual-functional wireless networks for 6G and beyond,'' \textit{IEEE J. Sel. Areas Commun.}, vol. 40, no. 6, pp. 1728–1767, Jun. 2022.
	
	\bibitem{Nuria2024integrated}
	N. González-Prelcic, M. F. Keskin, O. Kaltiokallio, M. Valkama, D. Dardari, X. Shen, Y. Shen, M. Bayraktar, and H. Wymeersch, ``The integrated sensing and communication revolution for 6G: Vision, techniques, and applications,'' \textit{Proc. IEEE}, vol. 112, no. 7, pp. 676–723, Jul. 2024.
	
	\bibitem{3GPPISAC1}
	3GPP, ``Feasibility study on integrated sensing and communication,'' 3GPP TR 22.837, Release 19, 2024.
	
	\bibitem{3GPPISAC2}
	3GPP, ``Study on Integrated Sensing And Communication (ISAC) for NR,'' 3GPP TR 38.765, Release 20, 2026.
	
	
	
	\bibitem{Shi2022Device}
	Q. Shi, L. Liu, S. Zhang, and S. Cui, ``Device-free sensing in OFDM cellular network,'' \textit{IEEE J. Sel. Areas Commun.}, vol. 40, no. 6, pp. 1838–1853, Jun. 2022.
	\bibitem{Yang2026Hierarchical}
	R. Yang, X. Li, Y. Huang, L. Yang and W. Zhang, ``Hierarchical reinforcement learning-based beam selection for integrated sensing and communication systems,'' \textit{IEEE Trans. Wireless Commun.}, vol. 25, pp. 1767-1780, 2026.
	
	\bibitem{Wypich2025passive}
	M. Wypich, R. Maksymiuk and T. P. Zielinski, "5G-based passive radar utilizing channel response estimated via reference signals," \textit{IEEE Trans. on Radar Syst.}, vol. 3, pp. 511-519, 2025.
	
	\bibitem{Zhang2025Target}
	Z. Zhang, H. Ren, C. Pan, S. Hong, D. Wang, J. Wang and X. You, ``Target localization in cooperative ISAC systems: A Scheme based on 5G NR OFDM signals,'' \textit{IEEE Trans. Commun.}, vol. 73, no. 5, pp. 3562-3578, May 2025.
	
	
	\bibitem{Chu1972Polyphase}
	D. Chu, ``Polyphase codes with good periodic correlation properties (Corresp.),'' \textit{IEEE Trans. Inf. Theory}, vol. 18, no. 4, pp. 531–532, Jul. 1972.
	\bibitem{Liu2026Cooperative}
	H. Liu, Z. Wei, L. Sun, R. Xu, Y. Zhang and Z. Feng, ``Cooperative sensing in cell-free massive MIMO ISAC systems: Performance optimization and signal processing,'' \textit{IEEE Trans. Wireless Commun.}, vol. 25, pp. 12531-12547, 2026.
	
	\bibitem{Na2025Integrated}
	N. K. Nataraja, S. Sharma, K. Ali, F. Bai, R. Wang and A. F. Molisch, ``Integrated sensing and communication (ISAC) for vehicles: Bistatic radar with 5G-NR signals,'' in \textit{IEEE Trans. Veh. Technol.}, vol. 74, no. 4, pp. 6121-6137, Apr. 2025.
	
	\bibitem{Ding2025Bi}
	S. Ding, J. Li, B. Chen, D. Jiang, J. Yao, F. Qin, J. Tan, Y. Yuan, D. Zhang, and C.-L. I, ``Bi-static ISAC with asynchronous transceivers: Mechanism, solution, and field test,'' \textit{IEEE Internet Things J.}, vol. 12, no. 17, pp. 35923–35940, Sep. 2025.
	
	\bibitem{Mao2025Model}
	S. Liu, Z. Mao, X. Li, M. Pan, P. Liu, Y. Huang and X. You, ``Model-driven deep neural network for enhancing direction finding with commodity 5G gNodeB,'' \textit{ACM Trans. Sens. Netw.}, vol. 21, no. 2, pp. 1-25, Mar. 2025.
	\bibitem{Zhang2024Dual}
	H. Zhang, S. Wei, X. Cai, L. Nie, M. Wang, J. Shi, and G. Cui, ``Dual-domain feature-oriented interference suppression for FMCW automotive radar,'' \textit{IEEE Sens. J.}, vol. 24, no. 5, pp. 6405–6417, Mar. 2024.
	
	\bibitem{Lee2021Mutual}
	S. Lee, J.-Y. Lee, and S.-C. Kim, ``Mutual interference suppression using wavelet denoising in automotive FMCW radar systems,'' \textit{IEEE Trans. Intell. Transp. Syst.}, vol. 22, no. 2, pp. 887–897, Feb. 2021.
	
	\bibitem{Xu2021Interference}
	Z. Xu and M. Yuan, ``An interference mitigation technique for automotive millimeter wave radars in the tunable Q-Factor wavelet transform domain,'' \textit{IEEE Trans. Microw. Theory Techn.}, vol. 69, no. 12, pp. 5270–5283, Dec. 2021.
	
	%
	%
	
	\bibitem{Alland2019Interference}
	S. Alland, W. Stark, M. Ali, and M. Hegde, ``Interference in automotive radar systems: Characteristics, mitigation techniques, and current and future research,'' \textit{IEEE Signal Process. Mag.}, vol. 36, no. 5, pp. 45–59, Sep. 2019.
	
	\bibitem{Jin2019Automotive}
	F. Jin and S. Cao, ``Automotive radar interference mitigation using adaptive noise canceller,'' \textit{IEEE Trans. Veh. Technol.}, vol. 68, no. 4, pp. 3747–3754, Apr. 2019.
	
	
	\bibitem{Baral2023Automotive}
	A. B. Baral, B. R. Upadhyay, and M. Torlak, ``Automotive radar interference mitigation using two-stage signal decomposition approach,'' in \textit{Proc. IEEE Radar Conf. (RadarConf23)}, San Antonio, TX, USA, 2023, pp. 1–6.
	
	\bibitem{Siam2025Artificial}
	S. I. Siam, H. Ahn, L. Liu, S. Alam, H. Shen, Z. Cao, N. Shroff, B. Krishnamachari, M. Srivastava, and M. Zhang, ``Artificial intelligence of things: A survey,'' \textit{ACM Trans. Sens. Netw.}, vol. 21, no. 1, pp. 1–75, Jan. 2025.
	
	\bibitem{Nirmal2021Deep}
	I. Nirmal, A. Khamis, M. Hassan, W. Hu, and X. Zhu, ``Deep learning for radio-based human sensing: Recent advances and future directions,'' \textit{IEEE Commun. Surveys Tuts.}, vol. 23, no. 2, pp. 995–1019, Jun. 2021.
	
	\bibitem{Ristea2020Fully}
	N.-C. Ristea, A. Anghel, and R. T. Ionescu, ``Fully convolutional neural networks for automotive radar interference mitigation,'' in \textit{Proc. IEEE 92nd Veh. Technol. Conf. (VTC2020-Fall)}, Victoria, BC, Canada, 2020, pp. 1–5.
	
	\bibitem{Fuchs2020Automotive}
	J. Fuchs, A. Dubey, M. Lübke, R. Weigel, and F. Lurz, ``Automotive radar interference mitigation using a convolutional autoencoder,'' in \textit{Proc. IEEE Int. Radar Conf. (RADAR)}, Washington, DC, USA, 2020, pp. 1–6.
	
	\bibitem{Zhu2021Low}
	L. Zhu, S. Zhang, K. Chen, S. Chen, X. Wang, D. Wei, and H. Zhao, ``Low-SNR recognition of UAV-to-ground targets based on micro-Doppler signatures using deep convolutional denoising encoders and deep residual learning,'' \textit{IEEE Trans. Geosci. Remote Sens.}, vol. 60, pp. 1–13, 2021, Art. no. 5103913.
	
	
	\bibitem{Wang2024Interference}
	J. Wang, R. Li, X. Zhang, and Y. He, ``Interference mitigation for automotive FMCW radar based on contrastive learning with dilated convolution,'' \textit{IEEE Trans. Intell. Transp. Syst.}, vol. 25, no. 1, pp. 545–558, Jan. 2024.
	
	%
	
	\bibitem{Wang2025Physics}
	R. Wang and R. Yu, ``Physics-guided deep learning for dynamical systems: A survey,'' \textit{ACM Comput. Surv.}, vol. 58, no. 5, pp. 1–31, Jun. 2025.
	
	\bibitem{Wang2022Prior}
	J. Wang, R. Li, Y. He, and Y. Yang, ``Prior-guided deep interference mitigation for FMCW radars,'' \textit{IEEE Trans. Geosci. Remote Sens.}, vol. 60, pp. 1–16, 2022, Art. no. 5111216.
	
	\bibitem{Park2024Interference}
	D.-H. Park, G.-H. Park, J.-H. Park, J.-H. Bang, D. Kim, and H.-N. Kim, ``Interference suppression for an FM-radio-based passive radar via deep convolutional autoencoder,'' \textit{IEEE Trans. Aerosp. Electron. Syst.}, vol. 60, no. 1, pp. 106–118, Feb. 2024.
	
	\bibitem{Pang2025MFS}
	X. Pang, Y. Peng, P. Wang, W. Wang, and W. Xiang, ``MFS: A motion feature separation model for UAV detection under passive radar,'' \textit{IEEE Trans. Aerosp. Electron. Syst.}, vol. 61, no. 5, pp. 11450–11468, Oct. 2025.
	
	\bibitem{Zhang2024FUAS}
	H. Zhang, S. Wei, M. Wang, Y. Hu, J. Shi, and G. Cui, ``FUAS-Net: Feature-oriented unsupervised network for FMCW radar interference suppression,'' \textit{IEEE Trans. Microw. Theory Techn.}, vol. 72, no. 4, pp. 2602–2619, Apr. 2024.
	
	\bibitem{Zhao2025RaSPD}
	D. Zhao, Y. Yang, J. Yang, B. Li, Y. He, and Y. Lang, ``Ra-SPD: Radar signal interference mitigation using spectral–spatial decomposition,'' \textit{IEEE Internet of Things J.}, vol. 12, no. 22, pp. 47348–47365, Nov. 2025.
	
	\bibitem{Khalid2019Convolutional}
	H. Khalid, S. Pollin, M. Rykunov, A. Bourdouxc, and H. Sahli, ``Convolutional long short-term memory networks for doppler-radar based target classification,'' in \textit{Proc. IEEE Radar Conf. (RadarConf)}, Boston, MA, USA, 2019, pp. 1–6.
	
	\bibitem{3GPPNR}
	3GPP, ``NR; Physical channels and modulation,'' 3GPP TS 38.211, Release 19, 2025.
	\bibitem{CNNclass}
	Y. LeCun, L. Bottou, Y. Bengio, and P. Haffner, ``Gradient-based learning applied to document recognition,'' \textit{Proc. IEEE}, vol. 86, no. 11, pp. 2278–2324, Nov. 1998.
	\bibitem{UNet}
	O. Ronneberger, P. Fischer, and T. Brox, ``U-Net: Convolutional networks for biomedical image segmentation,'' in \textit{Proc. Int. Conf. Med. Image Comput. Comput.-Assist. Intervent. (MICCAI)}, Munich, Germany, 2015, pp. 234–241.
	\bibitem{CFAR}H. Rohling, ``Radar CFAR thresholding in clutter and multiple target situations,'' \textit{IEEE Trans. Aerosp. Electron. Syst.}, vol. 19, no. 4, pp. 608–621, Jul. 1983.

\end{thebibliography}
%
\makeatletter
\renewcommand{\@biblabel}[1]{\makebox[1.2em][l]{[#1]}} 
\def\IEEEbiblabelsep{0.5em} 
\makeatother

%
 




\vfill

\end{document}